\documentclass[dvipsnames]{article}
\usepackage{arxiv, times}
\usepackage{graphicx}
\usepackage{balance}  
\usepackage{booktabs} 
\usepackage{blkarray} 
\usepackage{amsmath,amsfonts,amssymb}
\usepackage{amsthm}
\usepackage{mathtools}
\usepackage{algorithm}
\usepackage[noend]{algpseudocode}

\usepackage[utf8]{inputenc}

\graphicspath{{figs/}}

\algblock{ParFor}{EndParFor}
\algnewcommand\algorithmicparfor{\textbf{parallel for}}
\algnewcommand\algorithmicpardo{\textbf{do}}
\algnewcommand\algorithmicendparfor{\textbf{end parfor}}
\algrenewtext{ParFor}[1]{\algorithmicparfor\ #1\ \algorithmicpardo}
\algrenewtext{EndParFor}{\algorithmicendparfor}

\makeatletter
\ifthenelse{\equal{\ALG@noend}{t}}%
  {\algtext*{EndParFor}}
  {}%
\makeatother

\newcommand{\push}[1]{\text{push} \left ( #1 \right )}
\newcommand{\pop}{\text{pop}()}

\newcommand{\algoname}[1]{\textnormal{\textsc{#1}}}

\newcommand{\E}{\mathbb{E}}
\newcommand{\Var}{\mathrm{Var}}

\newcommand{\papertitle}{Degree{S}ketch: Distributed Cardinality Sketches on Massive Graphs with Applications}

\usepackage{hyperref}
\hypersetup{
    unicode=false,
    pdftoolbar=true,
    pdfmenubar=true,
    pdffitwindow=false,
    pdfstartview={FitH},
    pdftitle={\papertitle},
    pdfauthor={Benjamin W. Priest},
    pdfsubject={\papertitle},
    pdfkeywords={Sketching algorithms, Distributed Algorithms, Massive Graphs},
    pdfnewwindow=true,
    colorlinks=True,
    linkcolor=BurntOrange,
    citecolor=RawSienna,
    filecolor=magenta,
    urlcolor=blue
}
\newcommand{\email}[1]{\tt\small\href{mailto:#1}{#1}}

\title{\papertitle}

\author{
Benjamin W. Priest\\[0.3cm]
       {Center for Applied Scientific Computing, Lawrence Livermore National Laboratory}\\[0.2cm]
       \texttt{\email{priest2@llnl.gov}}
}

\date{}

\iclrfinalcopy

\begin{document}

\maketitle

\begin{abstract}
We present \algoname{DegreeSketch}, a semi-streaming distributed sketch datastructure and demonstrate its utility for estimating local neighborhood sizes and local triangle count heavy hitters on massive graphs.
\algoname{DegreeSketch} consists of vertex-centric cardinality sketches distributed across a set of processors that are accumulated in a single pass, and then behaves as a persistent query engine capable of approximately answering graph queries pertaining to the sizes of adjacency set unions and intersections.
The $t$th local neighborhood of a vertex 
is the number of vertices reachable in $G$ from $v$ by traversing at most $t$ edges, whereas the local triangle count 
is the number of 3-cycles in which it is included.
Both metrics are useful in graph analysis applications, but exact computations scale poorly as graph sizes grow. 
We present efficient algorithms for estimating both local neighborhood sizes and local triangle count heavy hitters using \algoname{DegreeSketch}.
In our experiments we implement \algoname{DegreeSketch} using the celebrated hyperloglog cardinality sketch and utilize the distributed communication tool YGM to achieve state-of-the-art performance in distributed memory. 
\end{abstract}

\section{Introduction and Related Work}
 \label{sec:intro}

As graph datasets scales continue to grow in applications, basic queries are becoming increasingly difficult to answer.
How connected are the proteins in an interaction network in aggregate?
Which hyperlinks shortcut large numbers of possible intermediate webpages?
How many friends of friends of friends does a particular profile in a social network have?
Many such queries amount to reasoning about the unions and intersections of the neighbor sets of the vertices in a graph.
However, answering such queries exactly is typically superlinear in compute time and communication in distributed implementations, untenable for massive graphs.

Furthermore, the simple \emph{storage} of large graphs can become burdensome, particularly as scale-free graphs include vertices whose degree is linear in the size of the graph.
Not to mention, communicating neighborhood set information about such vertices is impractical.
It is therefore tempting to consider schemata for sublinearly summarizing the information contained in vertex adjacency sets, and estimating unions and intersections.
It is known that any data structure that provides relative error guarantees for the cardinality of a multiset with $n$ unique elements requires $O(n)$ space \citep{alon1999space}.
Consequently, investigators have developed many so-called \emph{cardinality sketches} that provide such relative error guarantees in $o(n)$ space while admitting a small probability of failure, such as PCSA \citep{flajolet1985probabilistic}, MinCount \citep{bar2002counting},  LogLog \citep{durand2003loglog}, Multiresolution Bitmap \citep{estan2003bitmap}, HyperLogLog \citep{flajolet2007hyperloglog}, and the space-optimal solution of \citep{kane2010optimal}.
While all these cardinality sketches have a natural union operation that allows one to combine the sketches of two multisets into a sketch of their union, most have no closed intersection operation.
Many, however, admit a heuristic intersection estimator that cannot obtain bounded error due to known lower bounds.

We present the \algoname{DegreeSketch} data structure, which maintains a cardinality sketch for each vertex distributed over a set of processors. 
These sketches accumulate in a single pass over a data stream describing the graph, and use a total amount of space polyloglinear in the number vertices - i.e. \algoname{DegreeSketch} is a semi-streaming data structure. 
We demonstrate its utility with distributed algorithms that estimate local $t$-neighborhood sizes, as well as edge- and vertex-local triangle count heavy hitters.

The local $t$-neighborhood of a vertex is the number of vertices that can be reached in $t$ hops.
As $t$ increases, the $t$-neighborhoods describe how the ``ball'' around the vertex grows.
Knowledge of these ball sizes can be useful for applications such as edge prediction in social networks \citep{gupta2013wtf} and probabilistic distance calculations \citep{boldi2011hyperanf, myers2014information}.
For example, knowing the 3-neighborhood size of a user profile in a social network predicts cost of performing a computation over the set of its friends of friends of friends.  
The \algoname{ANF} \citep{palmer2002anf} and \algoname{HyperANF} \citep{boldi2011hyperanf} algorithms estimate the neighborhood function, the ``average ball'' around vertices in a graph, by individually estimating and summing the local $t$-estimates for all vertices.
using Flajolet-Martin and \algoname{HyperLogLog} cardinality sketches, respectively.
We present an algorithm producing an estimate of a similar form, but its distributed implementation allows it to scale to much larger graphs. 
Moreover, \algoname{DegreeSketch} is a leave-behind reusable data structure.

Counting the number of triangles in simple graphs is a canonical problem in network science.
A ``triangle'' is a trio of co-adjacent vertices, and is the smallest nontrivial community structure.
Both the global count of triangles and the vertex-local counts, i.e. the number of triangles incident upon each vertex, are key to network analysis and graph theory topics such as cohesiveness \citep{lim2015mascot}, global and local clustering coefficients \citep{tsourakakis2008fast}, and trusses \citep{cohen2008trusses}.
Local triangle counts are useful in protein interaction analysis \citep{milo2002network},  spam detection \citep{becchetti2010efficient}, and community discovery \citep{wang2010triangulation, berry2011tolerating}.

Although many exact algorithms have been proposed for the triangle counting problem \citep{tsourakakis2008fast, becchetti2010efficient, chu2011triangle, suri2011counting, wolf2017fast}, their time complexity is superlinear in the number of edges ($O(m^{\frac{3}{2}})$).
In order to avoid this dreaded superlinear scaling for applications involving large graphs, many researchers have turned to streaming approximations.
These serial streaming algorithms  maintain a limited number of sampled edges from an edge stream.
Streaming global triangle estimation algorithms have arisen that sample edges with equal probability \citep{tsourakakis2009doulion}, sample edges with probability relative to counted adjacent sampled edges and incident triangles \citep{ahmed2017sampling}, and sample edges along with paths of length two \citep{jha2013space}. 
The first proposed semi-streaming local triangle estimation algorithm relies upon min-wise independent permutations accumulated over a logarthmic number of passes \citep{becchetti2008efficient}. 
More recently, true single-pass sampling algorithms have arisen such as \algoname{Mascot} \citep{lim2015mascot}
and \algoname{Tri\'est} \citep{stefani2017triest}.

While many distributed global and vertex-local triangle counting algorithms have been proposed, the overwhelming majority store the graph in distributed memory and return exact solutions \citep{suri2011counting, arifuzzaman2013patric, pearce2017triangle}.
Recently, the study of distributed streaming vertex-local triangle counting was intiated in earnest with the presentation of \algoname{Try-Fly} \citep{shin2018tri}, a parallelized generalization of \algoname{Tri\'est}, and its follow-up \algoname{DiSLR} that introduced limited edge redundancy \citep{shin2018dislr}.
Our approach is fundamentally different to these methods, depending upon sketching rather than sampling as its core primitive. 
\algoname{DegreeSketch} also permits the estimation of edge-local triangle counts, or the number of triangles in which individual edges participate.
While the sampling approaches produce estimates and a stochastically sampled sparse graph, we produce a leave-behind queryable data structure.

We begin with a discussion of some of the preliminaries and notation in Section~\ref{sec:notation}.
Section~\ref{sec:DS} introduces \algoname{DegreeSketch}, as well as describing algorithms that utilize \algoname{DegreeSketch} to perform local neighborhood size estimation as well as approximately recovering edge- and vertex-local triangle count heavy hitters.  
Section~\ref{sec:HLL} describes the details of a particular implementation of \algoname{DegreeSketch} using \algoname{HyperLogLog} cardinality sketches.
We conclude with experiments in Section~\ref{sec:experiments}.

\section{Preliminaries and Notation}
 \label{sec:notation}

Throughout this document we will consider an undirected graph $\mathcal{G} = (\mathcal{V}, \mathcal{E})$, which we assume to be large.
We adopt the usual convention that  $|\mathcal{V}| = n$ and $|\mathcal{E}| = m$.
Let $d_\mathcal{G}(x, y)$ be the length of the shortest path in $\mathcal{G}$ between $x, y \in \mathcal{V}$, and let $\mathbf{d}(x)$ be the degree of $x$.
We will consider a universe of processors $\mathcal{P}$, and further assume a partitioning of vertices to processors $f : \mathcal{V} \rightarrow \mathcal{P}$. 
We will occasionally abuse notation and use $f^{-1}(P)$ to describe the set of vertices that map to $P \in \mathcal{P}$.
We make no assumptions about the particulars of $f$, noting that vertex partitions are a subject of intense academic scrutiny.
In effect, our algorithms are designed to work alongside any reasonable $f$. 
In our algorithms we will occasionally use the keyword $\algoname{Reduce}$ to refer to a global sum, except were the operand is a max heap in which case it is the creation of a global max heap.

We assume that $\mathcal{G}$ is given by a data stream $\boldsymbol{\sigma}$, a sequential list of the edges of $\mathcal{E}$.
$\boldsymbol{\sigma}$ is further partitioned by some unknown means into $|\mathcal{P}|$ substreams, one to be read by each processor.
We assume that each processor $P \in \mathcal{P}$ has send and receive buffers $\mathcal{S}[P]$ and $\mathcal{R}[P]$, respectively.
Algorithms are given broken up into \textbf{Send}, \textbf{Receive}, and local \textbf{Computation} \textbf{Contexts}.
We make no assumptions as to how processors handle switching between contexts.
In our implementations we use the software package \algoname{YGM} \citep{priest2019you} to manages the send and receive buffers, as well as switching contexts, in a manner that is opaque to the client algorithm.

Consider a vertex $x \in \mathcal{V}$.
For $t \in \mathbf{N}$, let $\mathcal{N}_\mathcal{G}(x, t)$ be the local $t$-neighborhood of $x$ defined as 
\begin{equation} \label{eq:func:nbhd:local}
	\mathcal{N}_\mathcal{G}(x, t) = |\{y \in \mathcal{V} | d_\mathcal{G}(x, y) \leq t\}|.
\end{equation}
Moreover, let the \emph{global} $t$-neighborhood $\mathcal{N}_\mathcal{G}(t)$ be defined by 
\begin{equation} \label{eq:func:nbhd:sum}
	\mathcal{N}_\mathcal{G}(t) = \sum_{x \in \mathcal{V}} \mathcal{N}_\mathcal{G}(x, t).
\end{equation}
We define the edge-local triangle count of an edge $xy \in \mathcal{E}$ as
\begin{equation}
	\mathcal{T}_\mathcal{G}(xy) = |\{z \in \mathcal{V} \setminus \{x,y\} \mid yz, zx \in \mathcal{E}\}|.
\end{equation}
This allows us to define the more con
We define the vertex-local triangle count of $x$ as
\begin{equation}
	\mathcal{T}_\mathcal{G}(x) = |\{yz \in \mathcal{E} \mid xy, zx \in \mathcal{E} \wedge |\{x, y, z\}| = 3\}|.
\end{equation}
We can equivalently define vertex-local triangle counts in terms of edge-local triangle counts as
\begin{equation} \label{eq:sum_of_edge_tris}
	\mathcal{T}_\mathcal{G}(x) 
	= \frac{1}{2}\sum\limits_{xy \in \mathcal{E}} \mathcal{T}_\mathcal{G}(xy).
\end{equation}
We will also refer to the global number of triangles in a graph 
\begin{equation} \label{eq:global_tris}
	\mathcal{T}_\mathcal{G} 
	= \frac{1}{3}\sum_{x \in \mathcal{V}} \mathcal{T}_\mathcal{G}(x)
	= \frac{1}{3}\sum_{xy \in \mathcal{E}} \mathcal{T}_\mathcal{G}(xy)
\end{equation}
We will drop the subscripts where they are clear.

As we have stated, \algoname{DegreeSketch} consists of a set of cardinality sketches.
Many such sketches would suffice to implement DegreeSketch, so for the purposes of discussion we will abstract many of the particulars until Section~\ref{sec:HLL}.
We will assume a notional sketch that requires $O(\varepsilon^{-2})$ space, where $\varepsilon$ is an accuracy parameter.
We assume that this sketch supports an $\algoname{Insert}(\cdot)$ operation to add elements and admits a $\widetilde{| \cdot |}$ operator, providing an approximation that is within a multiplicative factor of $[1-\varepsilon, 1+\varepsilon]$ of the number of unique inserted items with high probability.
We also assume that the sketch affords a closed $\widetilde{\cup}$ operator to combine sketches, and a $\widetilde{| \cdot \cap \cdot |}$ operator to estimate intersection cardinalities.
For reasons that will be described in Section~\ref{sec:HLL:intersection}, we do not assume that the intersection operator has the same error properties as $\widetilde{| \cdot |}$.

\section{DegreeSketch}
 \label{sec:DS}

\algoname{DegreeSketch} maintains a distributed data structure $\mathcal{D}$ that can be queried for an estimate of a vertex's degree.
For each $x \in \mathcal{V}$ we maintain a cardinality sketch $\mathcal{D}[x]$, which affords the approximation of $\mathbf{d}(x)$. 
We will assume throughout that $f(x) \in \mathbf{P}$ is the processor that will hold $\mathcal{D}[x]$.

Algorithm~\ref{alg:ds:accumulation} describes the distributed accumulation of a \algoname{DegreeSketch} instance on a universe of processors $\mathcal{P}$.
In a distributed pass over the partitioned stream, processors use the partition function $f$ to send edges to the cognizant processors for each endpoint. 
These processors each maintain cardinality sketches for their assigned vertices.
When $P \in \mathcal{P}$ receives an edge $xy \in \mathcal{E}$ where $f(x) = P$, it performs $\algoname{Insert}(\mathcal{D}[x], y)$.
Once all processors are done reading and communicating, $\mathcal{D}$ is accumulated.

\begin{algorithm}[t] 
\caption{Accumulation}\label{alg:ds:accumulation}
\begin{flushleft}
        \textbf{Input:} 		$\boldsymbol{\sigma}$ - edge stream divided into $|\mathcal{P}|$ substreams\\
        	\hspace{3.2em}	$\mathcal{P}$ - universe of processors	 \\
        	\hspace{3.2em}	$\mathcal{S}$ - dictionary mapping $\mathcal{P}$ to send queues	 \\
        	\hspace{3.2em}	$\mathcal{R}$ - dictionary mapping $\mathcal{P}$ to receive queues	 \\
        	\hspace{3.2em}	$f$ - partition mapping $\mathcal{V} \rightarrow \mathcal{P}$	 \\
        \textbf{Returns:} $\mathcal{D}$ - accumulated DegreeSketch
\end{flushleft}
\begin{flushleft}
\begin{algorithmic}[1]
	\Statex \textbf{Send Context} for $P \in \mathcal{P}$:
		\While{$\mathcal{S}[P]$ is not empty}
  			\State $(f(x), xy) \gets \mathcal{S}[P].\pop$
			\State $\mathcal{R}[f(x)].\push{xy}$
  		\EndWhile
	\Statex \textbf{Receive Context} for $P \in \mathcal{P}$:
		\While{$\mathcal{R}[P]$ is not empty}
  			\State $xy \gets \mathcal{R}[P].\pop$
  			\If {$!\exists \mathcal{D}[x]$} 
				$\mathcal{D}[x] \gets \text{empty sketch}$
			\EndIf
	  		\State $\algoname{Insert}(\mathcal{D}[x], y)$
  		\EndWhile
	\Statex \textbf{Computation Context} for $P \in \mathcal{P}$:
		\State $\mathcal{D} \gets $ empty $\algoname{DegreeSketch}$ dictionary
		\While{$\sigma_P$ has unread element $uv$}
 			\State $\mathcal{S}[P].\push{f(u), uv}$
 			\State $\mathcal{S}[P].\push{f(v), vu}$
		\EndWhile
		\State \Return $\mathcal{D}$
\end{algorithmic}
\end{flushleft}
\end{algorithm}

\algoname{DegreeSketch} can be implemented with any cardinality sketch that admits some form of close union operator and intersection estimation.
In fact, the algorithms in Sections~\ref{sec:edge_triangles} and ~\ref{sec:vertex_triangles} do not even require a closed union operator.
In our experiments, we focus on the well-known \algoname{HyperLogLog} or \algoname{HLL} cardinality sketches.

We describe \algoname{HLL} and discuss these features in greater detail in Section~\ref{sec:HLL}.
First, however, we describe algorithms utilizing \algoname{DegreeSketch} for neighborhood size estimation in Section~\ref{sec:neighborhoods}, recovering edge-local triangle count heavy hitters in Section~\ref{sec:edge_triangles}, and finally recovering vertex-local triangle count heavy hitters in Section~\ref{sec:vertex_triangles}.

\subsection{Neighborhood Size Estimation}
 \label{sec:neighborhoods}

Let $\mathcal{D}$ be an instance of \algoname{DegreeSketch} as described, so that for $x \in \mathcal{V}$, $\mathcal{D}[x]$ is a cardinality sketch of the adjacency set of $x$.
By the properties of cardinality sketches, if we know $x$'s neighbors we can compute an estimate of $\mathcal{N}(x, 2)$ by computing 

\begin{equation} \label{eq:estimate:nbhd}
	\widetilde{\mathcal{N}}_(x, 2) 
	= \widetilde{\left | \widetilde{\bigcup}_{y: xy \in \mathcal{E}} \mathcal{D}[y] \right |}.
\end{equation}

\begin{algorithm}[htbp] 
\caption{Neighborhood Approximation}\label{alg:ds:anf}
\begin{flushleft}
        \textbf{Input:} 		$\boldsymbol{\sigma}$ - edge stream divided into $|\mathcal{P}|$ substreams\\
        	\hspace{3.2em}	$\mathcal{P}$ - universe of processors	 \\
        	\hspace{3.2em}	$\mathcal{D}^1$ - accumulated \algoname{DegreeSketch} 	 \\
        	\hspace{3.2em}	$\mathcal{S}$ - dictionary mapping $\mathcal{P}$ to send queues	 \\
        	\hspace{3.2em}	$\mathcal{R}$ - dictionary mapping $\mathcal{P}$ to receive queues	 \\
        	\hspace{3.2em}	$f$ - partition mapping $\mathcal{V} \rightarrow \mathcal{P}$	 \\
        	\hspace{3.2em}	$k$ - maximum neighborhood degree	 \\
        \textbf{Returns:} $\widetilde{\mathcal{N}}(x, t)$, $\widetilde{\mathcal{N}}(t)$ for all $x \in \mathcal{V}$, $t \leq k$
\end{flushleft}
\begin{flushleft}
\begin{algorithmic}[1]
	\Statex \textbf{Send Context} for $P \in \mathcal{P}$:
		\While{$\mathcal{S}[P]$ is not empty}
			\If {next message is an \algoname{Edge}}
	  			\State $(f(x), xy, t) \gets \mathcal{S}[P].\pop$
				\State $\mathcal{R}[f(x)].\push{\algoname{Edge}, (xy, t)}$
			\ElsIf{next message is a \algoname{Sketch}}
				\State $(f(y), \mathcal{D}^{t-1}[x], y, t) \gets \mathcal{S}[P].pop()$
				\State $\mathcal{R}[f(y)].\push{\algoname{Sketch}, (\mathcal{D}^{t-1}[x], y, t)}$
			\EndIf
  		\EndWhile
	\Statex \textbf{Receive Context} for $P \in \mathcal{P}$:
		\While{$\mathcal{R}[P]$ is not empty}
			\If {next message is an \algoname{Edge}}
	  			\State $(xy, t) \gets \mathcal{R}[P].\pop$
				\State $\mathcal{S}[P].\push{\algoname{Sketch}, (f(y), \mathcal{D}^{t-1}[x], y, t))}$
			\ElsIf{next message is a \algoname{Sketch}}
				\State $(\mathcal{D}^{t-1}[x], y, t) \gets \mathcal{R}[P].\pop$
				\State $\mathcal{D}^t[y] \gets \mathcal{D}^t[y] \widetilde{\cup} \mathcal{D}^{t-1}[x]$
			\EndIf
  		\EndWhile
	\Statex \textbf{Computation Context} for $P \in \mathcal{P}$:
		\State $t \gets 1$
		\While{true}
			\State $\widetilde{\mathcal{N}}(x, t) \gets \widetilde{|\mathcal{D}^t[x]|}$ for $x \in f^{-1}(P)$
			\State $\widetilde{\mathcal{N}}(t) \gets \sum\limits_{x \in f^{-1}(P)} \widetilde{\mathcal{N}}(x, t)$ \label{alg:ds:anf:line:sum2}
			\State $\widetilde{\mathcal{N}}(t) \gets$ \algoname{Reduce} $\widetilde{\mathcal{N}}(t)$
			\State $t \gets t + 1$
			\If {$t > k$} 
			\textbf{break}
			\EndIf
			\State Reset $\mathcal{\sigma}_P$
			\State $\mathcal{D}^t \gets \mathcal{D}^{t - 1}$
			\While{$\sigma_P$ has unread element $uv$}
	 			\State $\mathcal{S}[P].\push{\algoname{Edge}, (f(u), uv, t)}$
 				\State $\mathcal{S}[P].\push{\algoname{Edge}, (f(v), vu, t)}$
			\EndWhile
		\EndWhile
\end{algorithmic}
\end{flushleft}
\end{algorithm}

Higher-order union operations of the form Equation~(\ref{eq:estimate:nbhd}) are the core of the \algoname{ANF} \citep{palmer2002anf} and \algoname{HyperANF} \citep{boldi2011hyperanf} algorithms.
Algorithm~\ref{alg:ds:anf} is a distributed generalization of \algoname{HyperANF} using \algoname{DegreeSketch}.
After accumulating $\mathcal{D}^1$, an instance of \algoname{DegreeSketch}, the algorithm takes a number of additional passes over $\boldsymbol{\sigma}$. 
For $t$ starting at 2, we accumulate 
\begin{equation} \label{eq:nextlayer}
\mathcal{D}^t[x] = \widetilde{\bigcup}_{y: xy \in \mathcal{E}} \mathcal{D}^{t-1}[y]
\end{equation}
by way of a message-passing scheme similar to Algorithm~\ref{alg:ds:accumulation}.
When $P \in \mathcal{P}$ receives an edge $xy \in \mathcal{E}$ where $f(x) = P$, it forwards $\mathcal{D}^{t-1}[x]$ to $f(y)$.
When $f(y)$ receives $D^{t-1}[x]$, it merges it into its next layer local sketch for $y$, $\mathcal{D}^t[y]$, computing Equation~\eqref{eq:nextlayer} once all messages are processed.
By construction, we have that
\begin{equation} \label{eq:nextlayer:full}
\mathcal{D}^t[x] = \widetilde{\bigcup}_{y: d(x,y) = s < t-1} \mathcal{D}^{s}[y].
\end{equation}
Ergo, the set of elements inserted into $\mathcal{D}^t[x]$ consists of all $y \in \mathcal{V}$ such that $d(x,y) \leq t$, which is to say that $\widetilde{|\mathcal{D}^t[x]|}$ directly approximates $\mathcal{N}(x, t)$ (Equation~\eqref{eq:func:nbhd:local}). 
These data structures can be maintained for later use by simply storing all $\mathcal{D}^t$ between passes.

The summations over all sketches in line 
\ref{alg:ds:anf:line:sum2} of Algorithm~\ref{alg:ds:anf} estimate $\mathcal{N}(t)$ in the form Equation~\ref{eq:func:nbhd:sum}.
Note that these summations are performed as distributed \algoname{Reduce} operations, and occur between passes over $\mathbf{\sigma}$.
The following theorem states the approximation quality of Algorithm~\ref{alg:ds:anf} when implemented with HLLs and is inspired by Theorem 1 of \citep{boldi2011hyperanf}.

\newtheorem{theorem}{Theorem}
\begin{theorem} \label{thm:nbhd}
Let $\mu_{r, n}$ and $\eta_{r, n}$ be the multiplicative bias and standard deviation for \algoname{HLL}s given in Theorem 1 of \citep{flajolet2007hyperloglog}.
The output $\widetilde{\mathcal{N}}(t)$ and $\widetilde{\mathcal{N}}(x, t)$ for $x \in \mathcal{V}$ at the $t$-th iteration satisfies 
\begin{equation*}
	\frac{\E \left [ \widetilde{\mathcal{N}}(t) \right ]}{\mathcal{N}(t)} 
	= \frac{\E \left [ \widetilde{\mathcal{N}}(x, t) \right ]}{\mathcal{N}(x, t)} 
	= \mu_{r, n} 
	\textnormal{ for $n \rightarrow \infty$,}
\end{equation*}
i.e. they are nearly unbiased.
Furthermore, both also have standard deviation bounded by $\eta_{r,n}$.
That is, 
\begin{equation*}
	\frac{\sqrt{\Var \left [ \widetilde{\mathcal{N}}(t)\right ]}}{\mathcal{N}(t)} \leq \eta_{r, n}
	\textnormal{ and }
	\frac{\sqrt{\Var \left [ \widetilde{\mathcal{N}}(x, t) \right ]}}{\mathcal{N}(x, t)} \leq \eta_{r, n}
\end{equation*}
\end{theorem}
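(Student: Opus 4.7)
The plan is to reduce the entire theorem to Theorem~1 of \citep{flajolet2007hyperloglog} by first establishing that each $\mathcal{D}^t[x]$ built by Algorithm~\ref{alg:ds:anf} is distributionally identical to the \algoname{HLL} one would obtain by directly inserting the set $\{y \in \mathcal{V} : d_\mathcal{G}(x,y) \leq t\}$ with a single shared hash function. Once this representation lemma is in hand, the per-vertex bias and standard deviation bounds are immediate, and the global statements follow from linearity of expectation together with a Cauchy--Schwarz bound on the cross-covariances.

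First, I would prove the representation claim by induction on $t$. The base case $t = 1$ is immediate from Algorithm~\ref{alg:ds:accumulation}, which calls $\algoname{Insert}(\mathcal{D}^1[x], y)$ exactly for $y$ in the adjacency set of $x$. For the inductive step, Equation~(\ref{eq:nextlayer}) defines $\mathcal{D}^t[x]$ as $\widetilde{\bigcup}_{y:xy \in \mathcal{E}} \mathcal{D}^{t-1}[y]$. Because the \algoname{HLL} union is closed---with a shared hash, $\mathcal{D}_A \widetilde{\cup} \mathcal{D}_B$ is literally the sketch of $A \cup B$, since each register stores only a maximum of hash-derived leading-zero counts and is invariant to duplication and ordering---the inductive hypothesis gives that $\mathcal{D}^t[x]$ is the \algoname{HLL} of $\bigcup_{y : xy \in \mathcal{E}} \{z : d(y,z) \leq t-1\} = \{z : d(x,z) \leq t\}$. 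This elevates the set-level Equation~(\ref{eq:nextlayer:full}) to a sketch-level identity.

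Given the representation, Flajolet's Theorem~1 directly yields $\E[\widetilde{\mathcal{N}}(x,t)]/\mathcal{N}(x,t) = \mu_{r,n}$ as $n \to \infty$ and $\sqrt{\Var[\widetilde{\mathcal{N}}(x,t)]}/\mathcal{N}(x,t) \leq \eta_{r,n}$ for each $x \in \mathcal{V}$. For the global quantity $\widetilde{\mathcal{N}}(t) = \sum_{x \in \mathcal{V}} \widetilde{\mathcal{N}}(x,t)$, linearity of expectation immediately yields $\E[\widetilde{\mathcal{N}}(t)] = \mu_{r,n}\,\mathcal{N}(t)$. For the variance, I would note that the sketches share a hash function and so are correlated, and apply Cauchy--Schwarz to every covariance entry:
\begin{equation*}
\Var\!\left[\sum_{x \in \mathcal{V}} \widetilde{\mathcal{N}}(x,t)\right]
\;\leq\; \left(\sum_{x \in \mathcal{V}} \sqrt{\Var[\widetilde{\mathcal{N}}(x,t)]}\right)^{\!2}
\;\leq\; \eta_{r,n}^{2}\left(\sum_{x \in \mathcal{V}} \mathcal{N}(x,t)\right)^{\!2}
\;=\; \eta_{r,n}^{2}\,\mathcal{N}(t)^{2},
\end{equation*}
which gives the claimed ratio bound after taking square roots.

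The main obstacle is the handling of cross-vertex dependence: because every $\mathcal{D}^t[x]$ is determined by the same hash function, the per-vertex estimates are not independent, so the sum's relative error does \emph{not} improve by averaging and one cannot use a product-form variance bound. The Cauchy--Schwarz step above is tight in the worst case and is exactly what the theorem's bound reflects. A secondary subtlety is the off-by-one question of whether $x$ itself belongs to its own sketch; this perturbs $\mathcal{N}(x,t)$ by at most one and is absorbed into the $n \to \infty$ asymptotic implicit in $\mu_{r,n}$, so it does not affect the statement.
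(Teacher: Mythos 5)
Your proposal is correct and follows essentially the same route as the paper's proof: establish that $\mathcal{D}^t[x]$ is an \algoname{HLL} of the $t$-ball around $x$ (the paper asserts this via Equation~\eqref{eq:nextlayer:full}, you prove it by induction), invoke Theorem~1 of \citep{flajolet2007hyperloglog} per vertex, and then pass to the global quantities by linearity of expectation and subadditivity of the standard deviation. Your Cauchy--Schwarz bound on the cross-covariances is exactly the justification for the ``subadditivity'' inequality the paper cites, so you are merely being more explicit about the dependence between sketches induced by the shared hash function (and about the harmless off-by-one from whether $x$ lies in its own ball), not taking a different path.
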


\begin{proof}
For each $x$, $\widetilde{\mathcal{N}}(x, t) = |\mathcal{D}^k[x]|$, where $\mathcal{D}^k[x]$ is a union of \algoname{HLL}s, into which every $y$ such that $d(x,y) \leq t$ is inserted, as we noted from Equation~\eqref{eq:nextlayer:full}. 
Thus by Theorem 1 of \citep{flajolet2007hyperloglog}, 
\begin{align*}
\E \left [ \widetilde{\mathcal{N}}(x, t) \right ] 
&= \mu_{r,n} \mathcal{N}(x, t) \\
\sqrt{\Var \left [ \widetilde{\mathcal{N}}(x, t) \right ]} 
&= \eta_{r,n} \mathcal{N}(x, t).
\end{align*}
Thus, by the linearity of expectation and the subadditivity of variance, 
\begin{align*}
\E \left [ \widetilde{N}(t) \right ] 
&= \sum_{x \in \mathcal{V}} \E \left [ \widetilde{\mathcal{N}}(x, t) \right ]  \\
&= \mu_{r, n} \sum_{x \in \mathcal{V}} \mathcal{N}(x, t)  
= \mu_{r,n} \mathcal{N}(t),
\textnormal{ and} \\
\sqrt{\Var \left [ \widetilde{N}(t) \right ]} 
&\leq \sum_{x \in \mathcal{V}} \sqrt{\Var \left [ \widetilde{\mathcal{N}}(x, t) \right ]}  \\
&\leq \eta_{r, n} \sum_{x \in \mathcal{V}} \mathcal{N}(x, t)  
= \eta_{r, n} \mathcal{N}(t).
\end{align*}
\end{proof}

Theorem~\ref{thm:nbhd} tells us that the estimates of $\widetilde{N}(x, t)$ and $\widetilde{N}(t)$ retain the approximation guarantees of their underlying sketch.
Similar theorems for different cardinality sketches with closed $\widetilde{\cup}$ operators are similarly simple to prove. 
Hence, we are able to guarantee that all approximations produced by Algorithm~\ref{alg:ds:anf} retain the guarantees of their underlying cardinality sketches.

\subsection{Edge-Local Triangle Count Heavy Hitters}
 \label{sec:edge_triangles}

In addition to estimating local neighborhood sizes, \algoname{DegreeSketch} affords an analysis of local triangle counts using intersection estimation. 
Furthermore, while sampling-based streaming algorithms are limited to vertex-local triangle counts, \algoname{DegreeSketch} affords the analysis of edge-local triangle counts, which 
can be thought of as a generalization of vertex-local triangle counts. 
Given the edge-local triangle counts for each edge incident upon a vertex, we can easily compute its vertex-local triangle count as in Equation~(\ref{eq:sum_of_edge_tris}).
The reverse is not true. 

Edge-local triangle counts have understandably not received much attention in the streaming literature, considering that even enumerating them requires $\Omega(m)$ space. 
Given an accumulated \algoname{DegreeSketch} $\mathcal{D}$ we can estimate $\mathcal{T}(xy)$ using the intersection estimation operator,
\begin{equation} \label{eq:tri:edge}
	\widetilde{\mathcal{T}}(xy) 
	= \widetilde{|\mathcal{D}[x] \cap \mathcal{D}[y]|}.
\end{equation}
This procedure is similar in spirit to the well-known if suboptimal intersection method for local triangle counting. 
We can also estimate the total number of triangles $\mathcal{T}$ following Equation~(\ref{eq:global_tris}) by
\begin{equation} \label{eq:tri:total}
	\widetilde{\mathcal{T}} 
	= \frac{1}{3} \sum\limits_{xy \in \mathcal{E}} 	\widetilde{\mathcal{T}}(xy) 
\end{equation}

\begin{algorithm} [t]
\caption{Local Triangle Count Heavy Hitters Chassis}\label{alg:ds:chassis}
\begin{flushleft}
        \textbf{Input:} 		$\boldsymbol{\sigma}$ - edge stream divided into $|\mathcal{P}|$ substreams\\
        	\hspace{3.2em}	$k$ - integral heavy hitter count	 \\
        	\hspace{3.2em}	$\mathcal{P}$ - universe of processors	 \\
        	\hspace{3.2em}	$\mathcal{D}$ - accumulated \algoname{DegreeSketch} 	 \\
        	\hspace{3.2em}	$\mathcal{S}$ - dictionary mapping $\mathcal{P}$ to send queues	 \\
        	\hspace{3.2em}	$\mathcal{R}$ - dictionary mapping $\mathcal{P}$ to receive queues	 \\
        	\hspace{3.2em}	$f$ - partition mapping $\mathcal{V} \rightarrow \mathcal{P}$	 \\
\end{flushleft}
\begin{flushleft}
\begin{algorithmic}[1]
	\Statex \textbf{Computation Context} for $P \in \mathcal{P}$:
		\State $\widetilde{\mathcal{H}}_k \gets $ empty max $k$-heap
		\State $\widetilde{\mathcal{T}} \gets 0$
		\While{$\sigma_P$ has unread element $uv$}
 			\State $\mathcal{S}[P].\push{\algoname{Edge}, (f(u), uv)}$
		\EndWhile
		\State \algoname{Reduce}  $\widetilde{\mathcal{T}}$
		\State $\widetilde{\mathcal{T}} \gets \frac{1}{3} \widetilde{\mathcal{T}}$
		\State \algoname{Reduce} $\widetilde{\mathcal{H}}_k$
\end{algorithmic}
\end{flushleft}
\end{algorithm}

Unfortunately, while most cardinality sketches have a native and closed $\widetilde{\cup}$ operation, they all lack a satisfactory intersection operation, a consequence of the fact that detection of a trivial intersection is impossible in sublinear memory. 
Hence, we must instead make use of unsatisfactory intersection operations in practice, which has been a focus of recent research \citep{ting2016towards, cohen2017minimal, ertl2017new}.
We will discuss these in more detail in Section~\ref{sec:HLL:intersection}, and analyze their shortcomings in Appendix~\ref{apdx:intersections}.
For our purposes, we will suppose that $\widetilde{| \cdot \cap \cdot|}$ is reliable only where intersections are large.
Consequently, we will attempt only to recovery the heavy hitters, i.e. the edges participating in the greatest number of triangles. 

Algorithm~\ref{alg:ds:chassis} provides a chassis for Algorithms~\ref{alg:ds:edge_local}  and ~\ref{alg:ds:vertex_local}, which differ only in their communication behavior. 
In Algorithm~\ref{alg:ds:chassis}, all processors read over their edge streams and forward edges to one of their endpoints, similar to the behavior in the \textbf{Accumulation Context} of Algorithm~\ref{alg:ds:anf}.
They also initialize a counter $\widetilde{\mathcal{T}}$ and a max heap with a maximum size of $k$, $\widetilde{\mathcal{H}}_k$. 
These values are modified in the \textbf{Send Context} and \textbf{Receive Context}.

\begin{algorithm}[t] 
\caption{Edge-Local Triangle Count Heavy Hitters}\label{alg:ds:edge_local}
\begin{flushleft}
        \textbf{Returns:} $\widetilde{\mathcal{T}}$, $\widetilde{\mathcal{H}}_k$, the top $k$ edge estimates $\widetilde{\mathcal{T}}(xy)$
\end{flushleft}
\begin{flushleft}
\begin{algorithmic}[1]
	\Statex \textbf{Send Context} for $P \in \mathcal{P}$:
		\While{$\mathcal{S}[P]$ is not empty}
			\If {next message is an \algoname{Edge}}
	  			\State $(f(x), xy) \gets \mathcal{S}[P].\pop$
				\State $\mathcal{R}[f(x)].\push{\algoname{Edge}, xy}$
			\ElsIf{next message is a \algoname{Sketch}}
				\State $(f(y), \mathcal{D}[x], y) \gets \mathcal{S}[P].pop()$
				\State $\mathcal{R}[f(y)].\push{\algoname{Sketch}, xy}$
			\EndIf
  		\EndWhile
	\Statex \textbf{Receive Context} for $P \in \mathcal{P}$:
		\While{$\mathcal{R}[P]$ is not empty}
			\If {next message is an \algoname{Edge}}
	  			\State $xy \gets \mathcal{R}[P].\pop$
				\State $\mathcal{S}[P].\push{\algoname{Sketch}, (f(y), \mathcal{D}[x], xy)}$
			\ElsIf{next message is a \algoname{Sketch}}
				\State $(\mathcal{D}[x], xy) \gets \mathcal{R}[P].\pop$
				\State $\widetilde{\mathcal{T}}(xy) \gets \widetilde{|\mathcal{D}[y] \cap \mathcal{D}[x]|}$
				\State $\widetilde{\mathcal{T}} \gets \widetilde{\mathcal{T}} + \widetilde{\mathcal{T}}(xy)$\
				\State Try to insert $\widetilde{\mathcal{T}}(xy)$ into $\widetilde{\mathcal{H}}_k$
			\EndIf
  		\EndWhile
	\Statex \textbf{Computation Context} for $P \in \mathcal{P}$:
		\State Run Algorithm~\ref{alg:ds:chassis} using these communication contexts
\end{algorithmic}
\end{flushleft}
\end{algorithm}

Algorithm~\ref{alg:ds:edge_local} issues a chain of messages for each read edge, not unlike the procedure in Algorithm~\ref{alg:ds:anf}.
$P$ reads $uv$, and issues a message of type \algoname{Edge} containing $uv$ to $f(u)$.
Upon receipt, $f(u)$ issues a message of type \algoname{Sketch} containing $(\mathcal{D}[u], uv)$ to $f(v)$.
When $f(v)$ receives this message, it computes $\widetilde{\mathcal{T}}(uv)$ via Equation~\eqref{eq:tri:edge} and updates $\widetilde{\mathcal{T}}$ and $\widetilde{\mathcal{H}}_k$. 
Once computation is complete and all receive queues are flushed, the algorithm computes a global \algoname{Reduce} sum to find $\widetilde{\mathcal{T}}$ and similarly finds the global top $k$ estimates via a reduce on $\widetilde{\mathcal{H}}_k$. 
The algorithm returns $\widetilde{\mathcal{T}}/3$ (each triangle is counted 3 times) and $\widetilde{\mathcal{H}}_k$.

Algorithm~\ref{alg:ds:edge_local} addresses edge--local triangle count heavy hitter recovery using memory sublinear in the size of $\mathcal{G}$. 
It requires $\widetilde{O}(\varepsilon^{-2}m)$ time and communication, given our assumptions, and a total of $O(\varepsilon^{-2} |\mathcal{V}| \log\log |\mathcal{V}| + \log |\mathcal{V}|)$ space, where DegreeSketch is implemented using $\algoname{HyperLogLog}$ sketches with accuracy parameter  $\varepsilon$. 
Unfortunately, we are unable to provide an analytic bound on the error of this algorithm, due to the nature of sublinear intersection estimation. 
We provide an experimental exploration of this problem in Appendix~\ref{apdx:intersections}, and evaluate the performance of Algorithm~\ref{alg:ds:edge_local} in Section~\ref{sec:experiments}.

\subsection{Vertex-Local Triangle Count Heavy Hitters}
 \label{sec:vertex_triangles}

Given access to a \algoname{DegreeSketch} instance $\mathcal{D}$ and the neighbors of $x$, we can compute an estimate of $\mathcal{T}(x)$ using
\begin{equation} \label{eq:tri:vertex}
	\widetilde{\mathcal{T}}(x) 
	= \frac{1}{2} \sum_{y: xy \in \mathcal{E}} \widetilde{\mathcal{T}}(xy)  
	= \frac{1}{2} \sum_{y: xy \in \mathcal{E}} \widetilde{|\mathcal{D}[x] \cap \mathcal{D}[y]|},
\end{equation}
following Equations~(\ref{eq:sum_of_edge_tris}) and ~(\ref{eq:tri:edge}).
We must still limit our scope to the recovery of vertex-local triangle count heavy hitters due to problem of estimating small intersections.

Algorithm~\ref{alg:ds:vertex_local} performs vertex-local triangle count estimation in a manner similar to Algorithm~\ref{alg:ds:edge_local} with some additional steps. 
We maintain $\widetilde{\mathcal{T}}(x)$ for each $x \in \mathcal{V}$.
It performs similar work for $xy \in \mathcal{E}$ up to the point processor $f(y)$ estimates $\widetilde{\mathcal{T}}(xy)$. 
Instead of inserting this estimate into a local max heap, we add it to $\widetilde{\mathcal{T}}(y)$, and forward $(\widetilde{\mathcal{T}}(xy), x)$ to $f(x)$ so that it can add it to $\widetilde{\mathcal{T}}(x)$.
This message has the \algoname{Est} type, to distinguish it from \algoname{Edge} and \algoname{Sketch} messages.
After all processors are done communicating and updating their local counts, they assemble and reduce a max $k$-heap of vertex-local triangle count heavy hitters.
Note that we could also return $\widetilde{\mathcal{T}}(x)$ for all vertices $x$ at no additional cost, but this is not generally a good idea as explained in Appendix~\ref{apdx:intersections}.

Algorithm~\ref{alg:ds:vertex_local} addresses vertex--local triangle count heavy hitter recovery using the same asymptotic computation, memory and communication costs as Algorithm~\ref{alg:ds:edge_local}.
Unfortunately, we are similarly unable to provide an a priori analytic bound on the error of this algorithm. 

\begin{algorithm}[t] 
\caption{Vertex-Local Triangle Count Heavy Hitters}\label{alg:ds:vertex_local}
\begin{flushleft}
        \textbf{Output:} $\widetilde{\mathcal{T}}$, $\widetilde{\mathcal{H}}_k$, the top $k$ vertex estimates $\widetilde{\mathcal{T}}(x)$
\end{flushleft}
\begin{flushleft}
\begin{algorithmic}[1]
	\Statex \textbf{Send Context} for $P \in \mathcal{P}$:
		\While{$\mathcal{S}[P]$ is not empty}
			\If {next message is an \algoname{Edge}}
	  			\State $(f(x), xy) \gets \mathcal{S}[P].\pop$
				\State $\mathcal{R}[f(x)].\push{\algoname{Edge}, xy}$
			\ElsIf{next message is a \algoname{Sketch}}
				\State $(f(y), \mathcal{D}[x], xy) \gets \mathcal{S}[P].pop()$
				\State $\mathcal{R}[f(y)].\push{\algoname{Sketch}, (\mathcal{D}[x], xy)}$
			\ElsIf{next message is an \algoname{Est}}
				\State $(f(y), \widetilde{\mathcal{T}}(xy), y) \gets \mathcal{S}[P].pop()$
				\State $\mathcal{R}[f(y)].\push{\algoname{Est}, (\widetilde{\mathcal{T}}(xy), y)}$
			\EndIf
  		\EndWhile
	\Statex \textbf{Receive Context} for $P \in \mathcal{P}$:
		\While{$\mathcal{R}[P]$ is not empty}
			\If {next message is an \algoname{Edge}}
	  			\State $xy \gets \mathcal{R}[P].\pop$
				\State $\mathcal{S}[P].\push{\algoname{Sketch}, (f(y), \mathcal{D}[x], xy))}$
			\ElsIf{next message is a \algoname{Sketch}}
				\State $(\mathcal{D}[x], xy) \gets \mathcal{R}[P].\pop$
				\State $\widetilde{\mathcal{T}}(xy) \gets \widetilde{|\mathcal{D}[y] \cap \mathcal{D}[x])|}$
				\State $\widetilde{\mathcal{T}}(x) \gets \widetilde{\mathcal{T}}(x) + \widetilde{\mathcal{T}}(xy)$
				\State $\widetilde{\mathcal{T}} \gets \widetilde{\mathcal{T}} + \widetilde{\mathcal{T}}(xy)$
				\State $\mathcal{S}[P].\push{\algoname{Est}, (f(y), \widetilde{\mathcal{T}}(xy), y)}$
			\ElsIf{next message is an \algoname{Est}}
				\State $(\widetilde{\mathcal{T}}(xy), y) \gets \mathcal{R}[P].\pop$
				\State $\widetilde{\mathcal{T}}(y) \gets \widetilde{\mathcal{T}}(y) + \widetilde{\mathcal{T}}(xy)$
			\EndIf
  		\EndWhile
	\Statex \textbf{Computation Context} for $P \in \mathcal{P}$:
		\State $\widetilde{\mathcal{T}}(x) \gets 0$ $\forall x \in f^{-1}(P)$
		\State Run Algorithm~\ref{alg:ds:chassis} using these communication contexts
		\State Accumulate max heap $\widetilde{\mathcal{H}}_k$ from $\widetilde{\mathcal{T}}(x)$ $\forall x \in f^{-1}(P)$
\end{algorithmic}
\end{flushleft}
\end{algorithm}

\section{The HyperLogLog Sketch}
 \label{sec:HLL}

The HyperLogLog sketch is arguably the most popular cardinality sketch in applications, and has attained widespread adoption \citep{flajolet2007hyperloglog}.
The sketch relies on the key insight that the binary representation of a random machine word  starts with $0^{j-1} 1$ with probability $2^{-j}$. 
Thus, if the maximum number of leading zeros in a set of random words is $j-1$, then $2^j$ is a good estimate of the cardinality of the set \citep{flajolet1985probabilistic}.
However, this estimator clearly has high variance. 
The variance is traditionally minimized using stochastic averaging to simulate parallel random trials \citep{flajolet1985probabilistic}.

Assume we have a stream $\sigma$ of random machine words of a fixed size $W$.
For a $W=(p + q)$-bit word $w$, let $\xi(w)$ be the first $p$ bits of $w$, and let $\rho(w)$ be the number of leading zeros plus one of its remaining $q$ bits.
We pseudorandomly partition elements $e$ of $\sigma$ into $r = 2^p$ substreams of the form $\sigma_i = \{ e \in \sigma | \xi(e) = i \}$.
For each of these approximately equally-sized streams, we maintain an independent estimator of the above form.
Each register $\mathbf{r}_i$, $i \in [r]$, accumulates the value
\begin{equation} \label{eq:register}
\mathbf{r}_i = \max\limits_{x \in \sigma_i} \rho(x).
\end{equation}
Of course, in practice we simulate randomness using hash functions.
We utilize the non-cryptographic xxhash \citep{xxhash}  in our implementation. 
We will assume throughout that algorithms have access to such a hash function $h : 2^{64} \rightarrow 2^{64}$.

After accumulation, $\mathbf{r}_i$ stores the maximum number of leading zeroes in the substream $\sigma_i$, plus one. 
The authors of HyperLogLog show in \citep{flajolet2007hyperloglog} that the normalized bias corrected harmonic mean of these registers,
\begin{equation} \label{eq:estimator}
\widetilde{D} = \alpha_r r^2 \left ( \sum_{i=0}^{r-1} 2^{-\mathbf{r}_i} \right) ^{-1},
\end{equation}
where the bias correction term $\alpha_r$ is given by
\begin{equation} \label{eq:alpha}
\alpha_r :=  \left( r \int_{0}^\infty \left( \log_2 \left( \frac{2 + u}{1 + u} \right) \right)^r du \right) ^{-1},
\end{equation}
is a good estimator of $D$, the number of unique elements in $\sigma$.
The error of estimate $\widetilde{D}$, $|D - \widetilde{D}|$, has standard error $\approx 1.04 / \sqrt{r}$, i.e. $\widetilde{D}$ satisfies
\begin{equation} \label{eq:error_bound}
|D - \widetilde{D}| \leq (1.04/\sqrt{r}) D.
\end{equation}
 with high probability.
 However, Equation~\ref{eq:estimator} is known to experience practical bias on small and very large $D$.
 Expanding the hash function to handle 64-bit words instead of the original 32-bit words practically eliminates bias on large $D$ in most practical problems.
 Although modifications to handle small $D$ bias abound, we choose the approach of \algoname{LogLogBeta} \citep{qin2016loglog} for its simplicity and replace $\widetilde{D}$ with  
\begin{equation} \label{eq:beta_estimator}
\widetilde{E} = \alpha_r r (r - z) \left ( \beta(r, z) + \sum_{i=0}^{r-1} 2^{-\mathbf{r}_i} \right) ^{-1}.
\end{equation}
Here $z$ is the number of zero registers in $r$ and $\beta(r, z)$ is an experimentally determined bias minimizer.
We follow the lead of the authors of \algoname{LogLogBeta} and determined $\beta(r, z)$ as a 7th-degree polynomial of $\log(z)$, whose weights are set experimentally by solving a least-squares problem like in Section II.C. of \citep{qin2016loglog}. 

The majority of vertices in many application graphs have a small number of neighbors, which suggests that to maximize memory and communication efficiency we would like a cheaper way to represent sketches where most of $\mathbf{r}$ is empty.
We adopt the sparse representation for \algoname{HyperLogLog} sketches suggested by Heule et al. \citep{heule2013hyperloglog}.
Mathematically, the sparsification procedure is tantamount to maintaining the set $R = \left \{(i, \mathbf{r}_i) | \mathbf{r}_i \neq 0 \right \}$.
$R$ requires less memory than $\mathbf{r}$ when the cardinality of the underlying multiset is small. 
Moreover, it is straightforward to saturate a sparse sketch into a dense one once it is no longer cost effective to maintain it by instantiating $\mathbf{r}$ while assuming all registers not set in $R$ are zero. 
We will assume that $R$ is implemented as a map, where an element $R[j] = x$ if $(j, x) \in R$ and is zero otherwise.

A particular HyperLogLog sketch, $S$, consists a hash function $h$, a prefix size $p$ (typically between 4 and 16), a maximum register value $q$, and an array of $r=2^p$ registers, initially an empty sparse register list $R$. 
We summarize references to such a sketch as $\algoname{HLL}(p,q,h)$.
Algorithm~\ref{alg:hll} describes the functions supported by our version of the \algoname{HyperLogLog} sketch.

\begin{algorithm}[t] 
\caption{$\algoname{HLL}(p,q,h)$ Operations}\label{alg:hll}
\begin{flushleft}
        \textbf{State Variables} for $\algoname{HLL}(p, q, h)$ $S$ :		\\
    		\hspace{2.65em}	$\nu$  \,\,\, mode $\in \{\algoname{sparse}, \algoname{dense}\}$, initially $\algoname{sparse}$ \\
    		\hspace{2.65em}	$r$  \,\,\,\, $2^p$ \\
    		\hspace{2.65em}	$\mathbf{r}$  \,\,\,\, dense register list of size $r$, initially $\emptyset$ \\
    		\hspace{2.65em}	$R$  \,\,\, sparse register set, initially $\emptyset$ \\
    		\hspace{2.65em}	$z$  \,\,\,\, count of nonzero elements of $R$ or $\mathbf{r}$, initially $2^p$
\end{flushleft}
\begin{algorithmic}[1]
	\Function{Insert}{$S, e$}
		\State $w \gets h(e)$
		\State $j \gets \xi(w)$
		\State $x \gets \rho(w)$
		\State $\algoname{Insert}(S, j, x)$
	\EndFunction
	\Function{Insert}{$S, j, x$}
		\If {$\nu = \algoname{dense}$}
			\State $\mathbf{r}_j \gets \max ( \mathbf{r}_j, x)$
		\ElsIf {$\nu = \algoname{sparse}$}
			\State $R[j] \gets \max \{x, R[j]\}$ (see Figures 6 \& 7 of \citep{heule2013hyperloglog})
			\If {$|R| > r / 4$}
				\State $\algoname{Saturate}(S)$
			\EndIf
		\EndIf
	\EndFunction
	\Function{Saturate}{$S$}
		\State $\nu \gets \algoname{dense}$
		\For {$(j, x) \in R$} 
			\State $\algoname{Insert}(S, j, x)$
		\EndFor
		\State $R \gets \emptyset$
	\EndFunction
	\Function{Merge}{$S^{(0)}, S^{(1)}, \dots, S^{(\ell-1)}$}
		\State $S^* \gets \text{empty $\algoname{HLL}(p, q, h)$}$
		\For{$j \in [0,r)$} 
			\State $x \gets  \max\limits_{ i \in [0, \ell)} \left ( \max \left \{ \mathbf{r}^{(i)}_j, R^{(i)}[j] \right \} \right )$
			\If {$x \neq 0$} 
				\State $\algoname{Insert}(S^*, j, x)$
			\EndIf
		\EndFor
		\State \Return $S^*$
	\EndFunction
	\Function{Estimate}{$S$}
		\If{$\nu = \algoname{dense}$}
			\State \Return $\alpha_r r (r - z) \left ( \beta(r, z) + \sum_{i=0}^{r-1} 2^{-\mathbf{r}_i} \right) ^{-1}$
		\Else
			\State \Return $\alpha_r r (r - z) \left ( \beta(r, z) + \sum_{(j, x) \in R} 2^{-x} \right) ^{-1}$
		\EndIf
	\EndFunction
\end{algorithmic}
\end{algorithm}

Note that \algoname{HLL}s support a natural merge operation: taking the element-wise maximum of each index of a group of register vectors.
This requires that the two sketches were generated using the same hash function.
We assume throughout that all of the sketches in an instance of \algoname{DegreeSketch} are $\algoname{HLL}(p, q, h)$ where $p + q = 64$ and $h$ is fixed.
Estimate accuracy scales inverse squared with $p$ per Equation~(\ref{eq:error_bound}).
Thus, increasing $p$ improves estimation performance at the cost of computation and memory overhead.

\subsection{Intersection Estimation}
 \label{sec:HLL:intersection}

A na\"ive approach to estimating an intersection of two sets $A$ and $B$ using cardinality sketches might involve computing the intersection via the inclusion-exclusion principle:
\begin{equation} \label{eq:inclusion-exclusion}
	|A \cap B| = |A \cup B| - |A| - |B|.
\end{equation}
However, if we attempt to estimate each quantity on the right side of Equation~(\ref{eq:inclusion-exclusion})  the error noise in each estimate could result in a negative answer!
Furthermore, if the true intersection is small relative to the set sizes, or if one set is much larger than the other, the variance will be quite high.

Ertl describes a better intersection estimator using a maximum likelihood principle \citep{ertl2017new}.
The estimator yields estimates of $|A \setminus B|$, $|B \setminus A|$, and $|A \cap B|$. 
The algorithm depends on the optimization of a Poisson model, where it is assumed that $|A \setminus B|$ is drawn from a Poisson distribution with parameter $\lambda_a$, and similarly $|B \setminus A|$ and $|A \cap B|$ use Poisson parameters $\lambda_b$ and $\lambda_x$. These parameters can be related to the observed HyperLogLog register lists corresonding to $A$ and $B$, $\mathbf{r}^{(A)}$ and $\mathbf{r}^{(B)}$, via a loglikelihood function $\mathcal{L}(\lambda_a, \lambda_b, \lambda_x \mid \mathbf{r}^{(A)}, \mathbf{r}^{(B)})$ given by Equation~(70) of \citep{ertl2017new}, which we do not reproduce due to space constraints.
$\mathcal{L}(\lambda_a, \lambda_b, \lambda_x \mid \mathbf{r}^{(A)}, \mathbf{r}^{(B)})$ is a function of the statistics:
\begin{equation} \label{eq:cs}
	\begin{alignedat}{2}
		&\mathbf{c}^{(A),<}_k &= |\{i \mid k = \mathbf{r}^{(A)}_i < \mathbf{r}^{(B)}_i \}|, \\
		&\mathbf{c}^{(A),>}_k &= |\{i \mid k = \mathbf{r}^{(A)}_i > \mathbf{r}^{(B)}_i \}|, \\
		&\mathbf{c}^{(B),<}_k &= |\{i \mid k = \mathbf{r}^{(B)}_i < \mathbf{r}^{(A)}_i \}|, \\ 
		&\mathbf{c}^{(B),>}_k &= |\{i \mid k = \mathbf{r}^{(B)}_i > \mathbf{r}^{(A)}_i \}|, \\
		&\mathbf{c}^{=}_k &= |\{i \mid k = \mathbf{r}^{(A)}_i = \mathbf{r}^{(B)}_i \}|,
  \end{alignedat}
\end{equation} 
which capture the differences in register list distribution.
The inclusion-exclusion estimator 
loses information present in the more detailed count statistics statistics in Equation~\eqref{eq:cs}.
Algorithm 9 of \citep{ertl2017new} describes the estimation of $|A \setminus B|$, $|B \setminus A|$, and $|A \cap B|$ by accumulating the sufficient statistic \eqref{eq:cs} and using it to find the maximum of Equation~(70) in the source via maximum likelihood estimation.
The author shows extensive simulation evidence indicating that this method significantly improves upon the estimation error of a na\"ive estimator.
We provide further analyses of intersection estimation edge cases in Appendix~\ref{apdx:intersections}, and reaffirm some of Ertl's findings.

\section{Experiments}
 \label{sec:experiments}

We now evaluate the algorithms and claims made throughout this document.

\noindent
\textbf{Implementation}:
We implemented all of our algorithms in C++ and MVAPICH2 2.3. 
Inter- and intra-node communication is managed using the pseudo-asynchronous MPI-enabled communication software package YGM \citep{priest2019you}.
We used xxhash as our hash function implementation \citep{xxhash}.

\begin{figure}
\centering
\includegraphics[width=0.7\columnwidth]{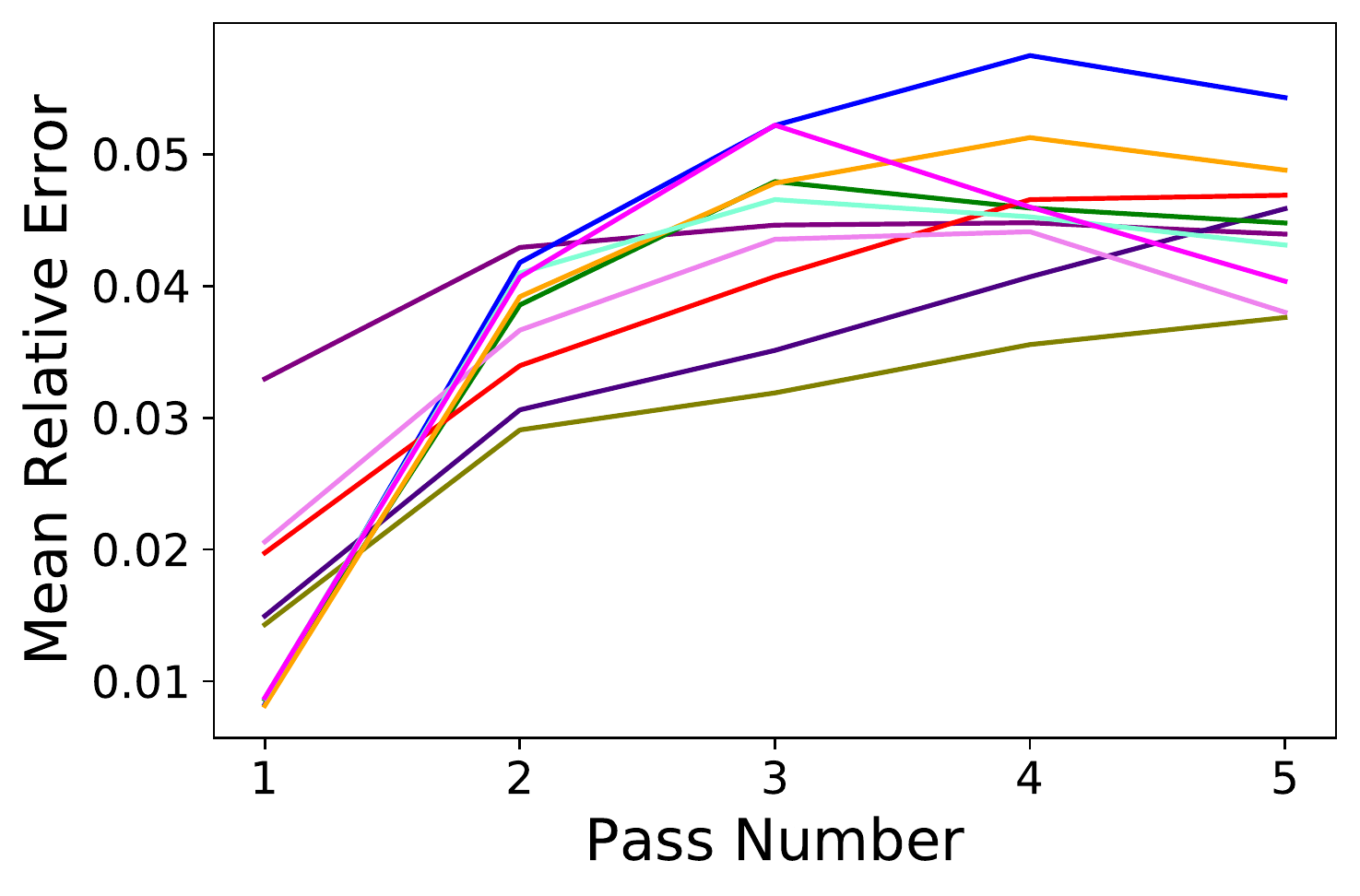}
\caption{Mean relative error estimating $\mathcal{T}(x, t)$ for all $x \in \mathcal{V}$ and $t$ up to 5 for 10 moderate graphs.
	Here a prefix size of 8 was used, guaranteeing standard error around 0.06.
}
\label{fig:nbhd}
\end{figure}

\noindent
\textbf{Hardware}: 
All of the experiments were performed on a cluster of compute nodes with twenty-four 2.4 GHz Intel Xeon E5-2695 v2 processor cores and 128GB memory per node. 
We varied the number of nodes per experiment depending on scalability requirements and the size of the graph.
We consider graph partitioning to be a separate problem, and accordingly use simple round-robin assignment for our experiments.

\noindent
\textbf{Graphs}: 
We ran our implementations on a collection of real and synthetic graphs. 
Many of these graphs are provided by Stanford's widely used SNAP dataset \citep{snapnets}.
These graphs are collected from natural sources, such as social media, transportation networks, email records, peer-to-peer communications, and so on.
We casted each graph as unweighted, ignoring directionality, self-loops, and repeated edges. 
We also used 5 graphs derived from nonstochastic Kronecker products of smaller graphs.
These graphs are described in detail in Appendix~\ref{apdx:kronecker}.

\noindent
\textbf{Experiments}:
Except where noted otherwise, we ran each experiment 100 times using the same settings while varying the random seed.
We set the prefix size $p$ in experiments depending on the accuracy needs, where larger $p$ implies greater performance at a higher cost.
We report mean relative error (MRE), where the relative error of and estimate $E$ of a true quantity $T$ is $\frac{|T - E|}{|T|}$.

\noindent
\textbf{Analysis}: 
We performed experiments on real graph datasets for the purpose of establishing the following of our algorithms:
\begin{enumerate}
	\item \textbf{Estimation Quality} Do the algorithms yield good local $t$-neighborhood estimates? 
		How accurate are the global and edge- and vertex-local estimates? 
	\item \textbf{Heavy Hitter recovery	} Do the heavy hitters returned by Algorithms~\ref{alg:ds:edge_local} and \ref{alg:ds:vertex_local} correspond to the ground truth heavy hitters?
	\item \textbf{Speed \& Scalability} How fast is accumulation? Estimation? How does wall time relate to $|\mathcal{P}|$?
\end{enumerate}

\begin{figure*}[t]
\centering
\includegraphics[width=1.0\columnwidth]{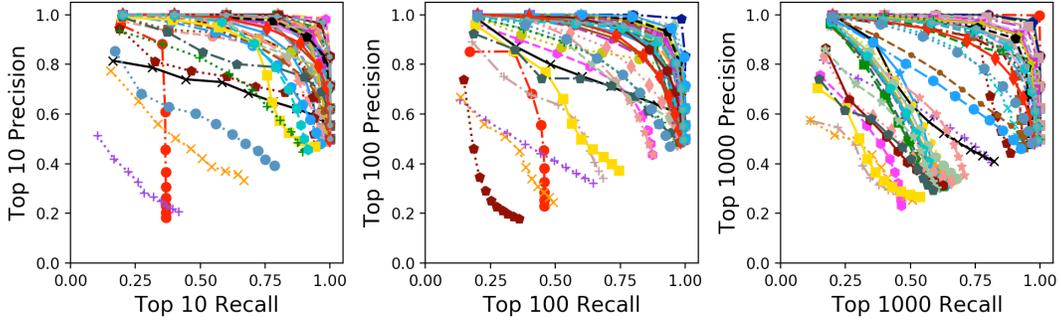}
\caption{Precision versus recall for the top 10, top 100, and top 1000 ground edge-local triangle count truth heavy hitters of all SNAP graphs and all 5 synthetic kronecker graphs.
}
\label{fig:precision_recall}
\end{figure*}

We examined the performance of the local $t$-neighborhood estimation algorithm (Algorithm~\ref{alg:ds:anf}) 
with prefix size of 8 on 10 moderately sized SNAP graphs up to $t = 5$.
Figure~\ref{fig:nbhd} displays the MRE of the returned local estimates. 
We find that the MRE matches our expectations informed by Theorem~\ref{thm:nbhd}. 
In early passes, most of the neighborhoods are relatively small and so in practice the cardinality sketches give very precise estimates. 
As $t$ grows, the neighborhood balls grow to saturate the graph and so accordingly the estimation error grows until leveling off around the theoretical error guarantee.

\begin{figure}[t]
\centering
\includegraphics[width=0.9\columnwidth]{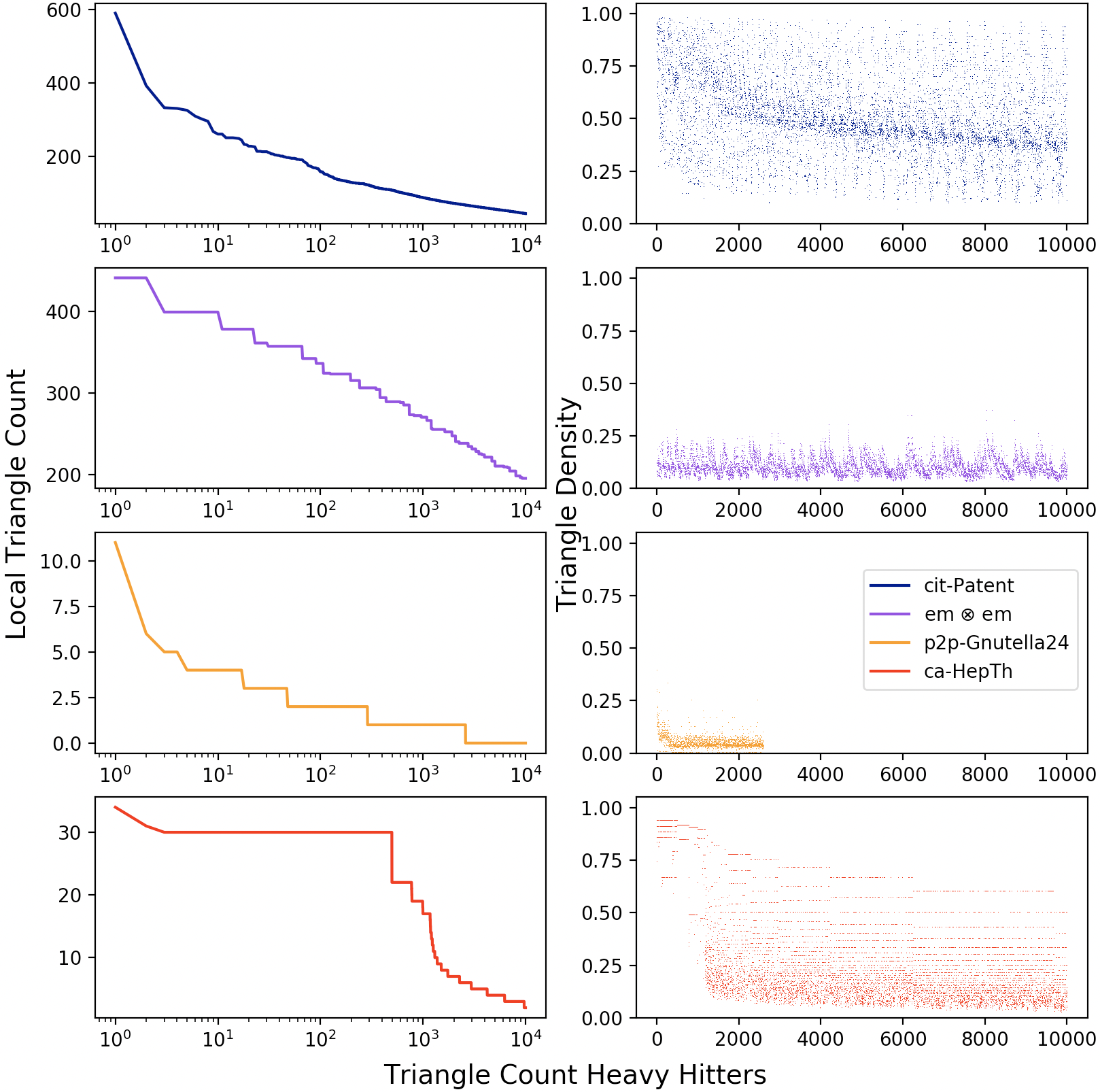}
\caption{The triangle counts and triangle densities of the edge-local triangle count heavy hitters up to $10^4$ for four graphs.
}
\label{fig:hh_comp}
\end{figure}

We experimented with Algorithm~\ref{alg:ds:edge_local} with a prefix size of 12, attempting to collect the top $k = 10, 100, 1000$ heavy hitters $\mathcal{H}_k$ for all of the SNAP and Kronecker graphs.
For each $k$, we ran the algorithm with $k^\prime$ ranging from $0.2 * k$ up to $2 * k$, producing $\widetilde{\mathcal{H}}_{k^\prime}$.
We performed a similar analysis and found similar results for Algorithm~\ref{alg:ds:vertex_local}, and omit them for space.

Treating $\widetilde{\mathcal{H}}_{k^\prime}$ as a one-class classifier of elements in $\mathcal{H}_k$, an edge $e \in \mathcal{E}$ is a true positive if $e \in \mathcal{H}_{k} \cap \widetilde{H}_{k^\prime}$, a false negative if $e \in \mathcal{H}_{k} \setminus \widetilde{\mathcal{H}}_{k^\prime}$, a false positive if $e \in \widetilde{\mathcal{H}}_{k^\prime} \setminus \mathcal{H}_{k}$, or a true negative if $e \not\in \mathcal{H}_{k} \cup\widetilde{\mathcal{H}}_{k^\prime}$.

We can report the quality of experiments in terms of its recall $\left ( \frac{TP}{TP + FN} \right )$ versus its precision $\left ( \frac{TP}{TP + FP} \right )$ for each setting of $k$ and $k^\prime$ in Figure~\ref{fig:hh_comp}.
The precision versus recall tradeoff is a common metric in information retrieval, where the goal is to tune model parameters so as to force both the precision and recall as close to one as possible.
Although these measures are known to exhibit bias, they are accepted as being reasonable for heavily uneven classification problems such as ours, where the class of interest is a small proportion of the samples \citep{boughorbel2017optimal}. 
Although most graphs show very good precision versus recall curves, there are a few notable outliers. 

Figure~\ref{fig:hh_comp} contrasts the edge-local triangle count distribution between a graph with good performance in Figure~\ref{fig:precision_recall} and three with relatively poor performance.
We find that \emph{triangle density}, or the ratio of edge-local triangles versus the union of endpoint adjacency sets is a powerful determiner of performance.
Triangle density corresponds with the Jaccard similarity of the edge's endpoint neighbor sets, and for neighbor sets $A$ and $B$ is computed as $\frac{|A \cap B|}{|A \cup B|}$.
In other words, again, relatively small intersections produce high error and uncertain heavy hitter recovery. 

The cit-Patent graph exhibits good performance in Figure~\ref{fig:precision_recall}, and demonstrates a reasonable triangle count distribution as well as high triangle density throughout in Figure~\ref{fig:hh_comp}.
The other three graphs demonstrate poor performance in Figure~\ref{fig:precision_recall}.
The kronecker em $\otimes$ em graph exhibits an unusual number of ties in its triangle count distribution due to its construction, in addition to low triangle density among its heavy hitters.
The P2P-Gnutella24 graph has very low triangle density, and a 3 or fewer triangles for the vast majority of its edges.
The ca-HepTh graph exhibits an unusual triangle distribution, where a huge portion of its edges tie at 30 triangles.
Consequently, even a perfect heavy hitter extraction procedure will fail on this graph.
Notably, the two edges with the largest triangle counts are reliably returned.

\begin{table}
\centering
\caption{Scaling Graphs \label{tab:scaling_graphs}}
\begin{tabular}{|c|c|c|c|}
\hline
\textbf{graph} & $|\mathcal{V}|$ &  $|\mathcal{E}|$ & Type \\
\hline
\hline
patents & 3,774,768 & 16,518,947 & Citation \\
\hline
\textbf{ye $\boldsymbol{\otimes}$ ye} & 5,574,320 & 88,338,632 & Kronecker \\
\hline
\textbf{or $\boldsymbol{\otimes}$ or} & 131,859,288 & 1,095,962,562 & Kronecker \\
\hline
Twitter & 41,652,224 & 1,201,045,942 & S. N. \citep{kunegis2013konect} \\
\hline
WDC & 3,563,602,788 & 128,736,914,864 & Web \\
\hline
\end{tabular}
\end{table}

\begin{figure}[t]
\centering
\includegraphics[width=0.5\columnwidth]{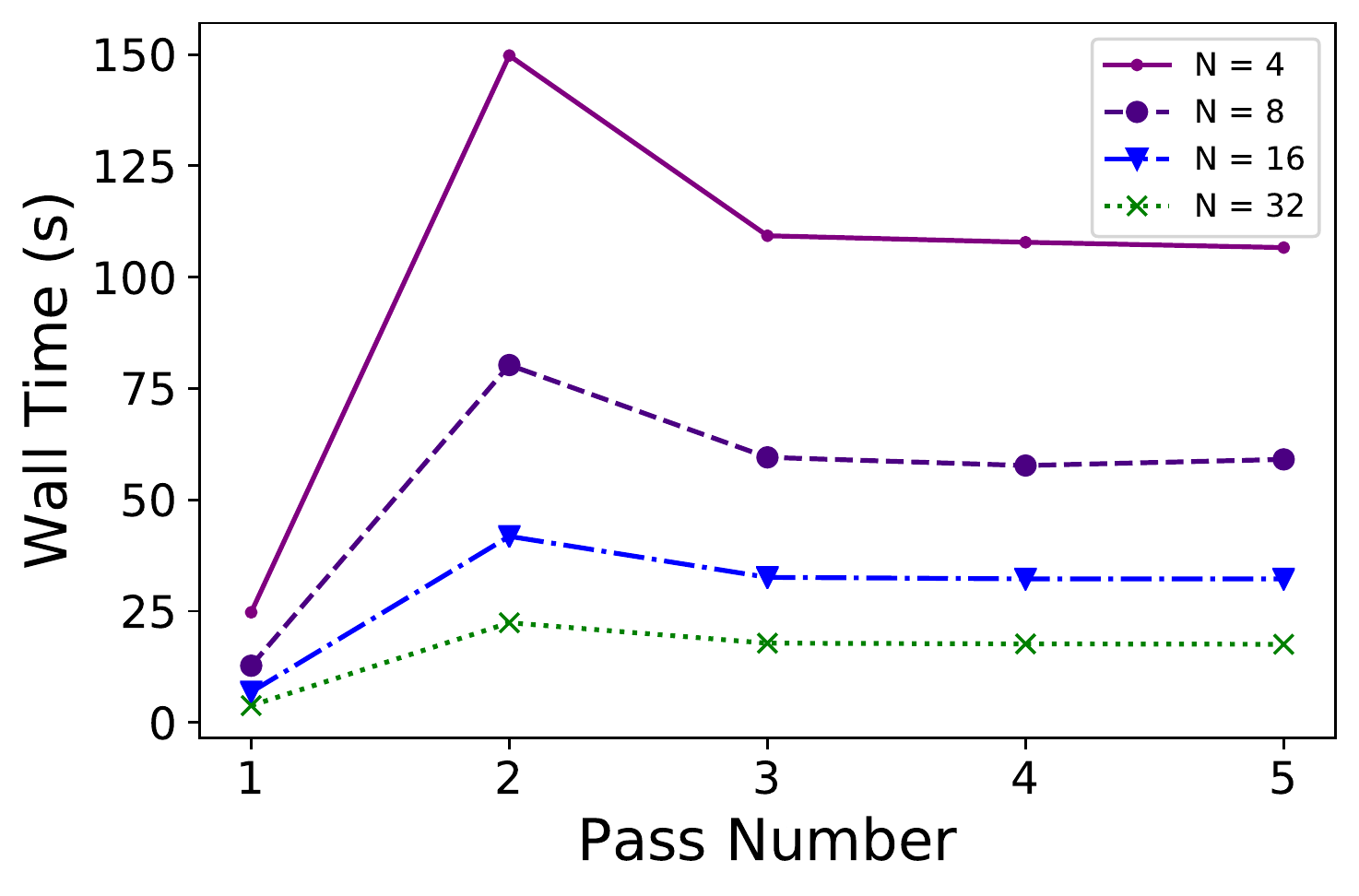}
\caption{The time in seconds to perform local $t$-neighborhood size estimation up to $t=5$ for the or $\otimes$ or graph for nodes $N = 4, 8, 16. 32$. 
}
\label{fig:nbhd_scaling}
\end{figure}

\begin{figure}
\centering
\includegraphics[width=0.7\columnwidth]{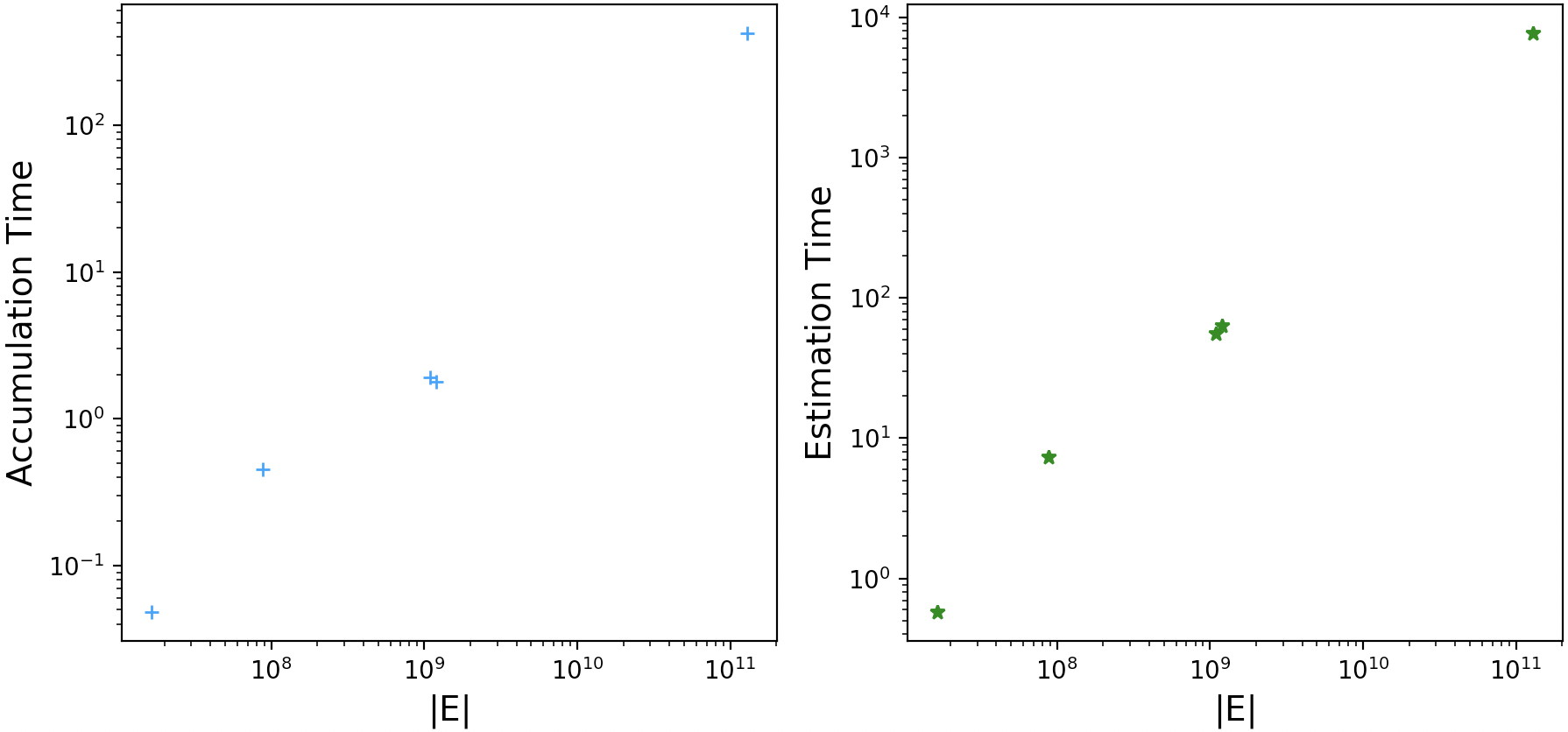}
\caption{The time in seconds to accumulate and perform local triangle count estimation using $N=72$ compute nodes for all graphs listed in Table~\ref{tab:scaling_graphs}.
}
\label{fig:strong_scaling}
\end{figure}

\begin{figure}
\centering
\includegraphics[width=0.7\columnwidth]{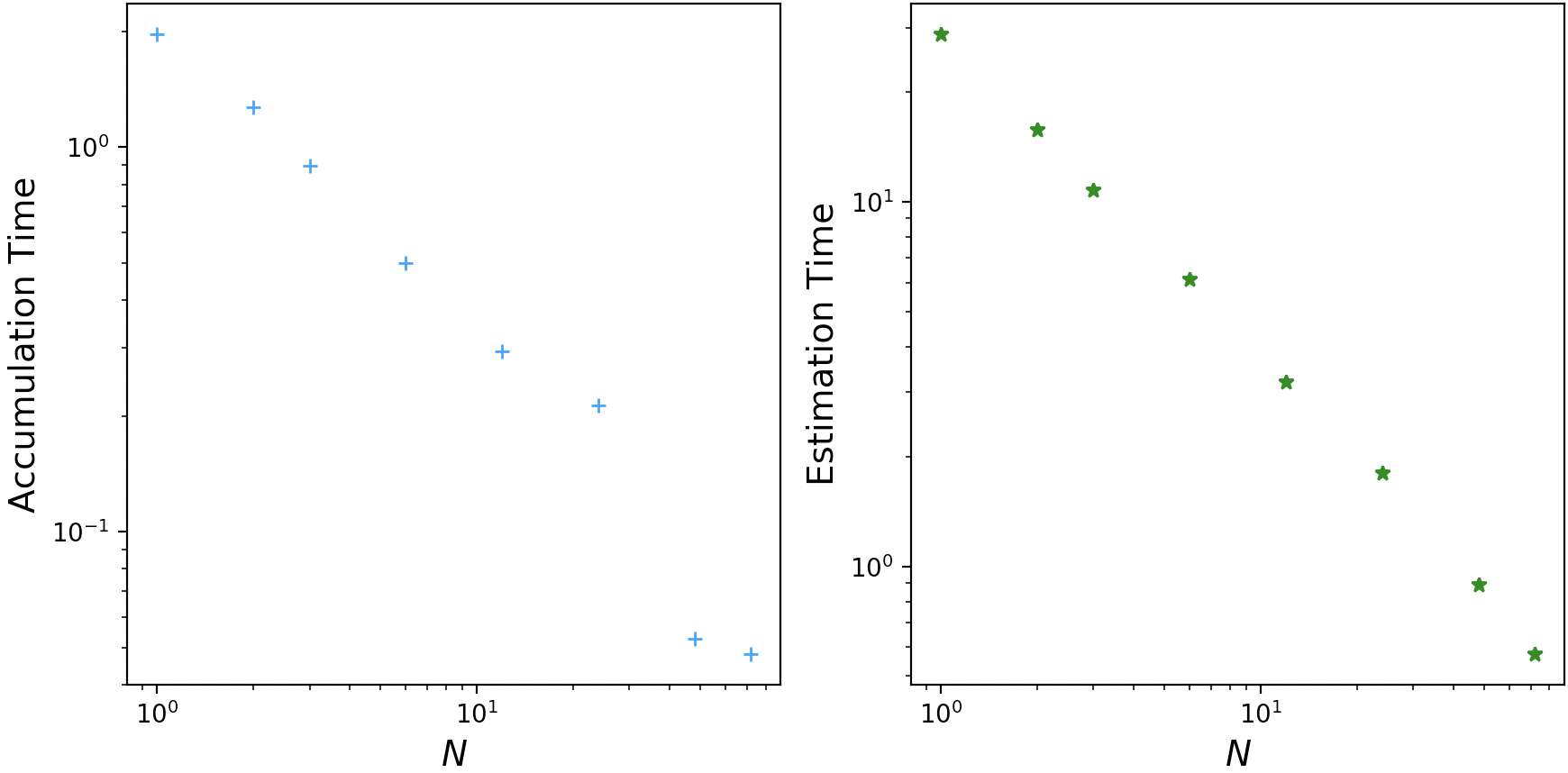}
\caption{The time in seconds to accumulate and perform local triangle count estimation using $N=1$ up to $72$ for the citation-Patents graph. 
}
\label{fig:weak_scaling}
\end{figure}

We also examined the performance scaling of the algorithms as a function of data and computing resource sizes.
For each experiment we used a prefix size of 8 and discounted the I/O time spent on reading data streams from files.
Algorithms~\ref{alg:ds:edge_local} and ~\ref{alg:ds:vertex_local} exhibit near-identical time performance, and so we report only the later.

We ran Algorithm~\ref{alg:ds:anf} for $t=5$ on the or $\otimes$ or Kronecker graph, varying the number of nodes from $N= 4, 8, 16, 32$. 
We note nice weak scaling behavior: as the computational resources double, the time roughly halves. 
In particular, pass 2 tends to take more time because of the sparsity settings in our implementations; merges are less efficient for sparse sketches. 
Once the sketches saturate, note that the time decreases significantly in later passes.
One could implement \algoname{DegreeSketch} using only dense sketches to avoid this hump.
If one only intends to perform local $t$=neighborhood estimation, omitting sparse sketches is a good idea as all sketches will eventually saturate as $t$ grows.

Similarly, Figure~\ref{fig:weak_scaling} measures the the time in seconds spent for Algorithms~\ref{alg:ds:accumulation} and ~\ref{alg:ds:vertex_local}  on the cit-Patents graph, where $N$ varies from 1 up to 72. 
This weak scaling experiment demonstrates significant performance improvements on a fixed amount of work as resources increase.

It is difficult to demonstrate strong scaling using graph data, as graphs (especially realistic ones) do not scale smoothly like, say, linear algebra.
We instead present a "strong scaling" experiment on the 5 large graphs in Table~\ref{tab:scaling_graphs}.
We found that subsequent passes of Algorithm~\ref{alg:ds:anf} exhibited similar behavior to Algorithm~\ref{alg:ds:vertex_local}, and so we only report results for the latter.

Figure~\ref{fig:strong_scaling} measures the the time in seconds spent accumulating $\algoname{DegreeSketch}$ and performing the vertex-local estimation on each graph, plotted against the number of edges in each graph. 
We used $N=72$ compute nodes in each case. 
As promised, the wall time is linear in the number edges for both accumulation and estimation, showing good scaling with graph size on fixed resources.
We found in experiments that the subsequent passes of Algorithm~\ref{alg:ds:anf} experienced similar linear scaling.

It is worth noting that a competing state-of-the-art exact triangle counting algorithm required $N=256$ compute nodes to even load the largest WebDataCommons graph into distributed memory \citep{pearce2017triangle}.

\section{Conclusions}

We have herein demonstrated the efficacy of \algoname{DegreeSketch} to scalably and approximately answering queries on massive graphs. 
Although we have focused on estimating neighborhood sizes and counting triangles, \algoname{DegreeSketch}'s utility extends to more general queries that can be phrased as unions and possibly an intersection of adjacency sets. 
In particular, although we have focused on simple undirected graphs in this work, colored graphs are an interesting area of future work.
A simple generalization to the work here presented allows us to estimate interesting queries of the form "how many of $x$'s $t$-neighbors are both red and green?" or "how many of $x$'s $t$-neighbors are not blue?"
\algoname{DegreeSketch}'s demonstrated scalability coupled with its demonstrated performance should prove useful in applications where such queries are prevalent, such as motif counting.

\section{Acknowledgments}
This work was performed under the auspices of the U.S. Department of Energy by Lawrence Livermore National Laboratory under Contract DE-AC52-07NA27344 (LLNL-CONF-757958). 
Experiments were performed at the Livermore Computing Facility.

\newpage
\bibliographystyle{arxiv}
\bibliography{bibliography.bib} 

\newpage
\begin{appendix}

\section{Vertex-Local Variance Bound}
\label{apdx:bound}

The following theorem bounds the estimator error variance of Algorithm~\ref{alg:ds:vertex_local} in terms of the variances of the atomic edge-local estimates using the subadditivity of the standard deviation - i.e. if random variables $a$ and $b$ have finite variance, $\sqrt{\Var \left [ a + b\right ]} \leq \sqrt{\Var \left [ a\right ]} + \sqrt{\Var \left [ b\right ]}$:

\begin{theorem} \label{thm:vertex_local:variance}
	Let $\widetilde{\mathcal{T}}(x)$ be the estimated output of Algorithm~\ref{alg:ds:vertex_local} for $x \in \mathcal{V}$, and let $\widetilde{\mathcal{T}}(xy)$ be the estimated edge triangle count for $xy \in \mathcal{E}$.
	Assume further that for each $xy$, we know a standard deviation bound $\eta_{xy}$ so that
	\begin{equation} \label{thm:vertex_local:variance:bound}
		\frac{\sqrt{\Var \left [ \widetilde{\mathcal{T}}(xy) \right ] }}{\mathcal{T}(xy)} \leq \eta_{xy}.
	\end{equation}
	Furhermore, let $\eta_{*} = \max_{xy \in \mathcal{E}} \eta_{xy}$.
	Then, $\widetilde{\mathcal{T}}(x)$ has at most twice this maximum standard deviation.
	That is,
	\begin{equation*}
		\frac{\sqrt{\Var \left [ \widetilde{\mathcal{T}}(x) \right ] }}{\mathcal{T}(x)} \leq 2\eta_{*}.
	\end{equation*}
\end{theorem}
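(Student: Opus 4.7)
The plan is to reduce the vertex-local variance bound to the given edge-local variance bound through the decomposition in Equation~(\ref{eq:sum_of_edge_tris}) and the well-known subadditivity of the standard deviation: for any two random variables with finite second moments, $\sqrt{\Var[a+b]} \leq \sqrt{\Var[a]} + \sqrt{\Var[b]}$ (which follows either from Minkowski's inequality on $L^2$, or by expanding $\Var[a+b] = \Var[a] + 2\Cov[a,b] + \Var[b]$ and applying Cauchy--Schwarz to the covariance cross-term). This inequality does not require any independence, which is crucial here since for fixed $x$ the estimates $\widetilde{\mathcal{T}}(xy)$ all share the sketch $\mathcal{D}[x]$ and are therefore highly correlated.

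The concrete steps are as follows. First, I would apply the finite-sum extension of subadditivity to the decomposition $\widetilde{\mathcal{T}}(x) = \tfrac{1}{2}\sum_{y:xy\in\mathcal{E}} \widetilde{\mathcal{T}}(xy)$ to obtain
\begin{equation*}
  \sqrt{\Var[\widetilde{\mathcal{T}}(x)]} \;\leq\; \tfrac{1}{2}\sum_{y:xy\in\mathcal{E}} \sqrt{\Var[\widetilde{\mathcal{T}}(xy)]}.
\end{equation*}
Next I would invoke the hypothesis (\ref{thm:vertex_local:variance:bound}) together with $\eta_{xy} \leq \eta_{*}$ to replace each edge-local standard deviation by $\eta_{*}\,\mathcal{T}(xy)$. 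Finally, I would apply the identity $\sum_{y:xy\in\mathcal{E}} \mathcal{T}(xy) = 2\mathcal{T}(x)$ from Equation~(\ref{eq:sum_of_edge_tris}) to collapse the sum, and divide through by $\mathcal{T}(x)$ to produce the stated relative-standard-deviation bound.

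The argument is mostly routine arithmetic once subadditivity is in hand, so the only real obstacle is justifying that one may bound the standard deviation of a sum of correlated terms by the sum of their standard deviations without any further structural assumption; this is the reason the theorem's proof sketch explicitly flags subadditivity of standard deviation in its preamble. The factor of two on the right-hand side of the conclusion is best understood as the product of the $\tfrac{1}{2}$ prefactor in the definition of $\widetilde{\mathcal{T}}(x)$ with the factor of $2$ that arises from double-counting each incident triangle in $\sum_{y:xy\in\mathcal{E}}\mathcal{T}(xy)$; the two effects do cancel if one tracks them tightly, but the stated bound absorbs them conservatively as $2\eta_{*}$.
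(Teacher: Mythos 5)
Your proof is correct and follows essentially the same route as the paper's: subadditivity of the standard deviation applied to the edge-local decomposition of $\widetilde{\mathcal{T}}(x)$, then the hypothesis bound \eqref{thm:vertex_local:variance:bound} with $\eta_{xy}\leq\eta_*$, then Equation~\eqref{eq:sum_of_edge_tris} to collapse the sum. The only difference is bookkeeping of the constant: the paper's proof writes $\widetilde{\mathcal{T}}(x)=\sum_{xy\in\mathcal{E}}\widetilde{\mathcal{T}}(xy)$ without the $\tfrac{1}{2}$ prefactor of Equation~\eqref{eq:tri:vertex} and lands exactly on $2\eta_*$, whereas you retain the prefactor and, as you correctly observe, obtain the tighter constant $\eta_*$, which still implies the stated bound.
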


\begin{proof}
\begin{align*}
	\frac{\sqrt{\Var \left [ \widetilde{\mathcal{T}}(x) \right ] }}{\mathcal{T}(x)}
	&=
	\frac{\sqrt{\Var \left [ \sum\limits_{xy \in \mathcal{E}} \widetilde{\mathcal{T}}(xy) \right ] }}{\mathcal{T}(x)}
	& \\
	&\leq
	\frac{\sum\limits_{xy \in \mathcal{E}} \sqrt{\Var \left [ \widetilde{\mathcal{T}}(xy) \right ] }}{\mathcal{T}(x)}
	& \textnormal{subadditivity} \\
	&\leq
	\frac{\sum\limits_{xy \in \mathcal{E}} \eta_{xy} \mathcal{T}(xy)}{\mathcal{T}(x)}
	& \textnormal{Equation~\eqref{thm:vertex_local:variance:bound}}  \\
	&\leq
	\frac{\eta_* \sum\limits_{xy \in \mathcal{E}} \mathcal{T}(xy)}{\mathcal{T}(x)}
	&  \\
	&=
	2\eta_*
	&  \textnormal{Equation~\eqref{eq:sum_of_edge_tris}}\\
\end{align*}
\end{proof}

Theorem~\ref{thm:vertex_local:variance} shows that if we can bound the error variance of the edge-local triangle count estimates produced using \algoname{DegreeSketch}, we can also bound the error variance of the vertex-local triangle count estimates produced by Algorithm~\ref{alg:ds:vertex_local}. 
Unfortunately, we are unable to provide these bounds a priori, as they depend upon the sizes of all of the the sets and their intersections, which are an unknown function of the graph. 
However, if we are somehow promised that the triangle density of every edge is above a given threshold, we are able to produce analytic guarantees of the estimation error.

\section{Dominations and Small Intersections}
 \label{apdx:intersections}

We have noted that there are limitations to the sketch intersection estimation in Section~\ref{sec:HLL:intersection}.
There appear to be two main sources of large estimation error in practice.
Throughout we will borrow the parlance of Section~\ref{sec:HLL:intersection}.

The first source of error is the phenomenon where $\mathbf{r}^{(A)}_i > \mathbf{r}^{(B)}_i$ for all $i$ where $\mathbf{r}^{(B)}_i > 0$, resulting in  $\mathbf{c}^{(A),<}_k = \mathbf{c}^{(B),>}_k = 0$ for all $k$ and $\mathbf{c}^{=}_k = 0$ for all $k > 0$. 
This generally only occurs when $|A| \gg |B|$ or $B \subseteq A$.
We say that such an $A$ \emph{strictly dominates} $B$. 
In this case, Equation~(70) of \citep{ertl2017new} can be rewritten as the sum of functions depending upon $\lambda_a$ and $\lambda_b + \lambda_x$. 
This means that the optimization relative to $\lambda_a$ does not depend upon $\lambda_x$ or $\lambda_b$.
The optimization relative to $\lambda_b + \lambda_x$ is similarly independent of $\lambda_a$, and thus is tantamount to using the maximum likelihood estimator for $B$ independent of $A$. 
Consequently, $\lambda_x$ could be anything between 0 and $\widetilde{\lambda}_{(B)}$ without affecting the optimum joint likelihood, resulting in arbitrary estimates for the intersection.

\begin{figure}[t]
\centering
\includegraphics[width=0.5\columnwidth]{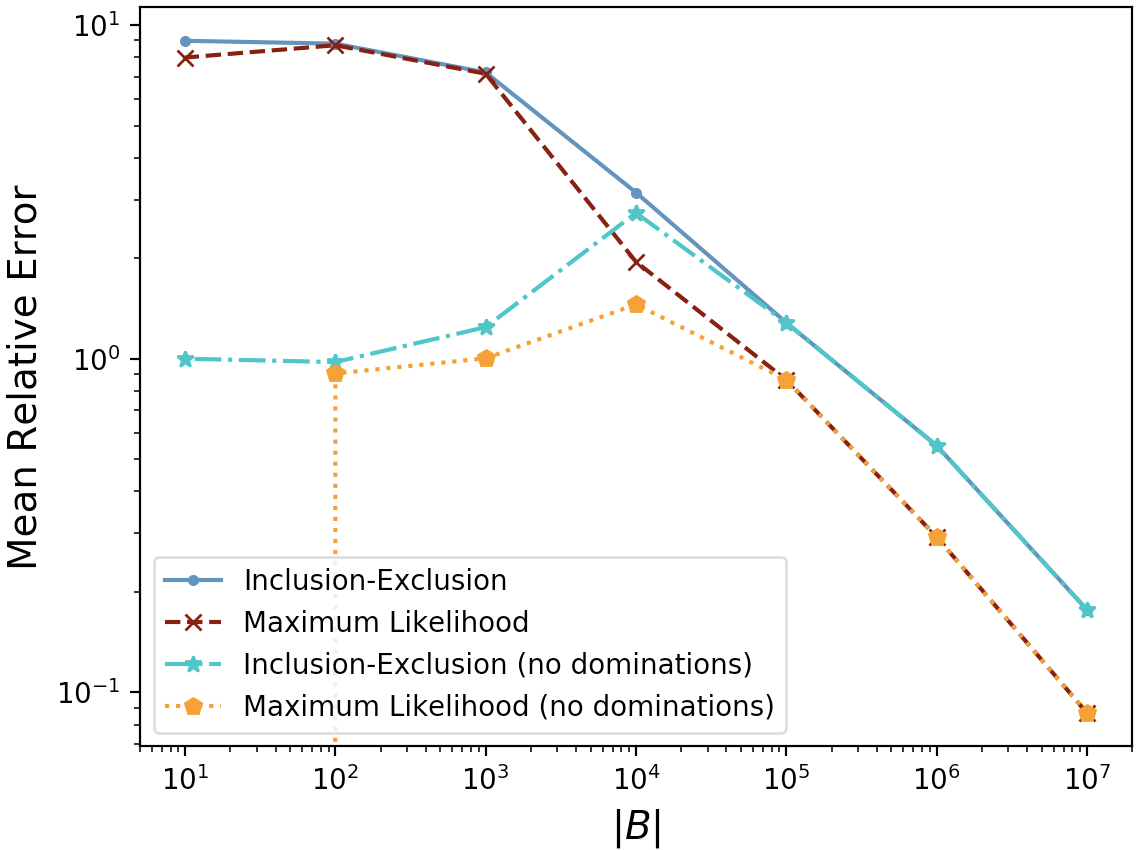}
\caption{Mean relative intersection error as a function of $|B|$, where $|A \cap B| = \frac{|B|}{10}$.
}
\label{fig:domination}
\end{figure}

We also consider the phenomenon where $\mathbf{r}^{(A)}_i \geq \mathbf{r}^{(B)}_i$ for all $i$, resulting in  $\mathbf{c}^{(A),<}_k = \mathbf{c}^{(B),>}_k = 0$ for all $k$. 
We say that such an $A$ \emph{dominates} $B$. 
We are unable to make the same analytic statements about Equation~(70), as the terms dependent upon $\mathbf{c}^{=}$ are not eliminated.
Consequently, the optimum estimate for $\lambda_a$ depends upon $\lambda_b$ and $\lambda_x$. 
If $A$ dominates $B$, the count statistics given by Equation~\ref{eq:cs} are unable to distinguish whether $B$ is subset of $A$. 
Many and large nonzero values for $\mathbf{c}^{=}_k$ for large $k$ will bias the optimization towards larger intersections, whereas the converse is true if $\mathbf{c}^{=}_k$ is nonzero for only a few small values of $k$.
If $|A| \gg |B|$, then the latter might occur whether $|A \cap B|$ is large or small.
Furthermore, note that if $B \subseteq A$, then $A$ will (possibly strictly) dominate $B$. 

If $A$ dominates $B$, then $S^{(A \cup B)} = S^{(A)}$.
Ergo, the inclusion-exclusion estimator returns the estimated value of $B$.
This estimate is dubious, given that we have no evidence that the sets $A$ and $B$ hold any elements in common. 
This is especially true if $|A| \gg |B|$.
Hence, both the na\"ive and maximum likelihood estimators may suffer from bias when a domination event occurs.
Figure~\ref{fig:domination} plots the mean relative error as one of the sets decreases, with a fixed relative intersection size.
As $|B|$ gets smaller, the likelihood of a domination increases. 
At $|B|= 10^4$ dominations occur in $6.6\%$ of cases, at $|B|= 10^3$ dominations occur in $76.9\%$ of cases, at $|B|= 10^2$ dominations occur in $97.5\%$ of cases, and at $|B|= 10$ dominations occur in $99.8\%$ of cases.
In particular in the two cases where $|B|=10$ and $|A \cap B| = 1$ and a domination does not occur, the maximum likelihood estimator returns exactly 1. 
So for a fixed intersection size relative to $|B|$, both the inclusion-exclusion and maximum likelihood estimators return more reasonable estimates when dominations do not occur.
However, there is no known reliable method to avoid them in practice.

Consequently, it might be safest to disregard dominations in practice, as failing to do so is theoretically unsound and likely to produce high and arbitrary error.
However, this poses a problem for many graph applications, as one will frequently have to compare the sketches of high degree vertices with those of comparatively low degree to find their joint triangle count.

\begin{figure}
\centering
\includegraphics[width=0.5\columnwidth]{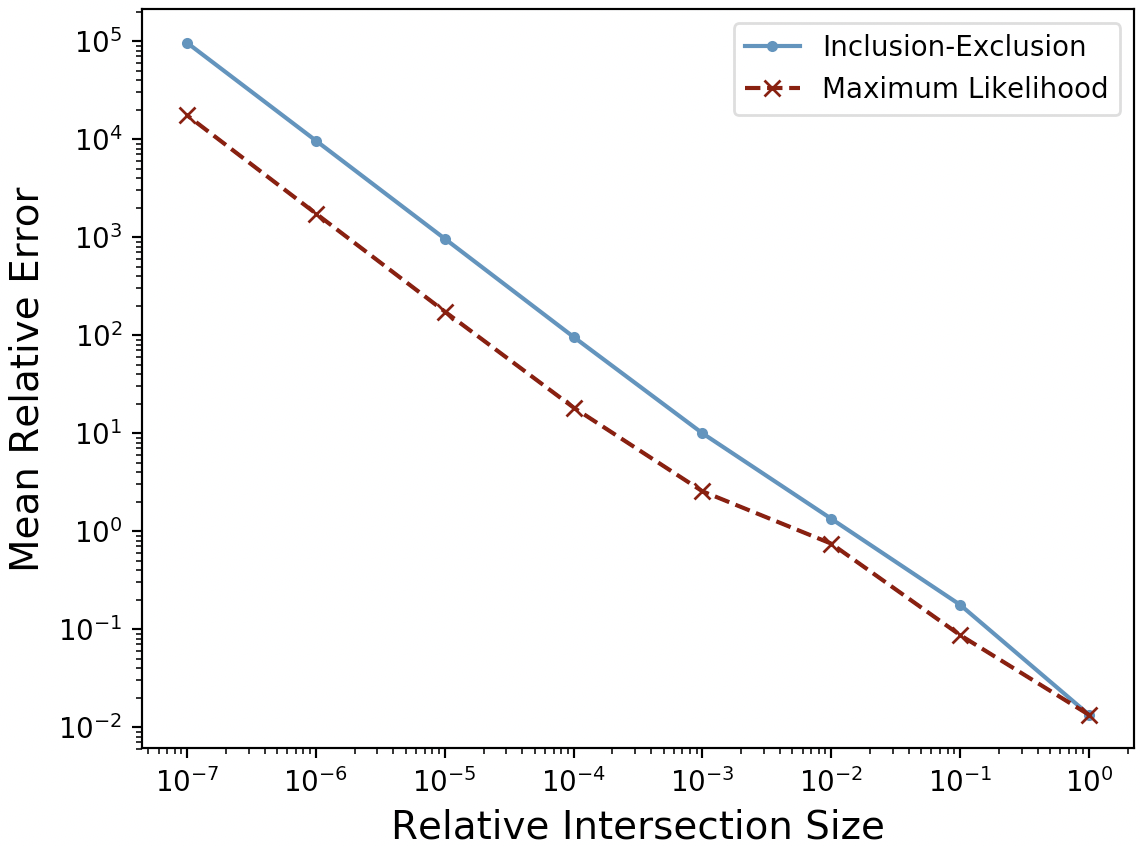}
\caption{\algoname{HLL} inclusion-exclusion and maximum likelihood intersection estimator performance where $|A| = |B| = 10^7$ and $|A \cap B|$ varies from $1$ up to $|B|$.
}
\label{fig:mle}
\end{figure}

We have also noted the problem of small intersections. 
As discussed above, the maximum likelihood intersection estimate is proportional to the number (and size) of the nonzero $\mathbf{c}^{=}_k$ for $k>0$, where larger $k$ biases the estimate toward larger intersections. 
If the ground truth intersection is small relative to $|A|$ and $|B|$, however, Equation~(70) will exhibit high variance.

Figure~\ref{fig:mle} compares the performance of the inclusion-exclusion estimator to the maximum likelihood estimator for a prefix size of 12.
Here the set sizes are kept constant at $10^7$
Note that the mean relative error grows quite large as the relative interection size decreases, although the maximum likelihood estimator consistently outperforms the inclusion-exclusion estimator by roughly an order of magnitude.

\section{Kronecker Graph Construction}
 \label{apdx:kronecker}

Nonstochastic Kronecker graphs \citep{weichsel1962kronecker} have adjacency matrices $C$ that are Kronecker products $C = C_1 \otimes C_2$, where the factors are also adjacency matrices.
This type of synthetic graph is attractive for testing graph analytics at massive scale \citep{leskovec2010kronecker, kepner2018design}, as ground truth solution is often cheaply computable.
For such graphs, global triangle count and triangle counts at edges are computed via Kronecker formulas \citep{sanders2018large}: for a graph with $m$ edges, the worst-case cost of computing global triangle counts is sublinear, $O \left ( m^{\frac{3}{4}} \right )$, whereas the cost of computing the full set of edge-local counts is $O \left ( m \right )$.

Here, we build $C = C_1 \otimes C_2$ from identical factors, $C_1=C_2$, that come from a small set of  graphs with $m$ up to $10^5$ from the University of Florida sparse matrix collection ({\tt polbooks}, {\tt celegans}, {\tt geom}, {\tt yeast} \citep{davis2011university}). 
All graphs were forced to be undirected, unweighted, and without self loops.  
We compute the number of triangles at each edge for $C_1$ and use the Kronecker formula in \citep{sanders2018large} to get the respective quantities for $C$.
Summing over the edges and dividing by 3 gives the global triangle count for $C$.

\end{appendix}

\end{document}